\def\BibTeX{{\rm B\kern-.05em{\sc i\kern-.025em b}\kern-.08em
    T\kern-.1667em\lower.7ex\hbox{E}\kern-.125emX}}
\newcommand\HUGE{\@setfontsize\Huge{34}{60}}
\def\namedlabel#1#2{\begingroup
    #2%
    \def\@currentlabel{#2}%
    \phantomsection\label{#1}\endgroup
}
\theoremstyle{plain}
\newtheorem{theorem}{Theorem}
\newtheorem{lemma}[theorem]{Lemma}
\newtheorem{corollary}[theorem]{Corollary}
\newtheorem{proposition}[theorem]{Proposition}
\theoremstyle{definition}
\newtheorem{example}[theorem]{Example}
\newtheorem{remark}[theorem]{Remark}
\newcommand{\Z}{\mathbb{Z}}
\newcommand{\N}{\mathbb{N}}
\newcommand{\F}{\mathbb{Z}}
\newcommand{\nolla}{\mathbf{0}}
\newcommand{\bw}{\mathbf{w}}
\newcommand{\bd}{\mathbf{d}}
\newcommand{\bc}{\mathbf{c}}
\newcommand{\supp}{\textrm{supp}}
\newcommand{\bx}{\mathbf{x}}
\newcommand{\bz}{\mathbf{z}}
\newcommand{\y}{\mathbf{y}}
\newcommand{\by}{\mathbf{y}}
\newcommand{\LL}{\mathcal{L}}
\title{On Unique Error Patterns in the Levenshtein's Sequence Reconstruction Model
	\thanks{The authors were funded in part by the Academy of Finland grants 338797 and 358718. 		
	A shorter version of this article was presented in ISIT2023 \cite{junnila2023levenshtein}.}
}
\author{Ville Junnila, Tero Laihonen and Tuomo Lehtil{\"a}\thanks{The authors are  with the Department of Mathematics and Statistics,
University of Turku, Finland (e-mail: viljun@utu.fi, terolai@utu.fi, tualeh@utu.fi). Some of the work of the third author was performed at the Department of Computer Science, University of Helsinki, Finland.}%
}
\begin{document}

\maketitle

\begin{abstract}
In the Levenshtein's sequence reconstruction problem a codeword is transmitted through $N$ channels and in each channel a set of errors
is introduced to the transmitted word. 
In previous works, the restriction that each channel provides a unique output word has been essential. In this work, we assume only that each channel introduces a unique set of errors to the transmitted word and hence some output words can also be identical. As we will discuss, this interpretation is both natural and useful for deletion and insertion errors. We give properties, techniques and (optimal) results for this situation.   

Quaternary alphabets are relevant
due to applications related to DNA-memories. Hence, we introduce an efficient Las Vegas style
decoding algorithm for simultaneous insertion, deletion and substitution errors in $q$-ary
Hamming spaces for $q \geq 4$.

\end{abstract}

\noindent\textbf{Keywords:}
	Information Retrieval, DNA-memory, Levenshtein's Sequence Reconstruction, Decoding Algorithm, Substitution Errors, Deletion Errors, Insertion Errors.

\setcounter{page}{1} 
\section{Introduction}\label{SecIntro}

We study \textit{Levenshtein's sequence reconstruction problem} introduced in \cite{Levenshtein}. In particular, we consider insertion, deletion and substitution errors in  $q$-ary Hamming spaces. The topic has been widely studied during recent years \cite{yaakobi2018uncertainty, Uusi_Maria_Abu-Sini,   goyal2022sequence, gabrys2018sequence, Maria_Abu-Sini,  junnila2022levenshtein,  junnila2020levenshtein,  horovitz2018reconstruction, chrisnata2022correcting}. Levenshtein's original motivation came from molecular biology and chemistry, where  adding redundancy was not feasible. Recently, Levenshtein's problem has returned to the limelight with the rise of advanced memory storage technologies such as associative memories \cite{yaakobi2018uncertainty}, racetrack memories \cite{chee2018reconstruction} and, especially, DNA-memories \cite{Uusi_Maria_Abu-Sini}, where the information is stored to DNA-strands. In the information retrieval process from the DNA-memories multiple, possibly erroneous, strands are obtained, due to biotechnological limitations \cite{Uusi_Maria_Abu-Sini}, which makes Levenshtein's model suitable for this topic. Another interesting property which we obtain from DNA-applications, is the emphasis on  information based on a quaternary alphabet over the binary due to the four types of nucleotides in which the information is stored (see \cite{bornholt2016dna, church2012next, grass2015robust, yazdi2015dna, sabary2024survey} for information about DNA-memories).

We will denote the set $\{1,2,\dots,n\}$ by $[1,n]$ and by $\F_q^n$ the \textit{$q$-ary $n$-dimensional Hamming space}. For a word $\bw\in \F^q_n$, we use notation $\bw=w_1w_2\dots w_n$ where each $w_i\in [0,q-1]$. The  \emph{support} of a word $\bw=w_1\dots w_n\in \F_q^n$ is defined as $\supp(\bw)=\{i\mid w_i\neq 0\}$, the \textit{weight} of $\bw$ with $w(\bw)=|\supp(\bw)|$ and the \textit{Hamming distance} between $\bw$ and $\bz$ with $d(\bw,\bz)=w(\bw-\bz)$. For the \textit{Hamming balls} we use the notation $B_t(\bw)=\{\bz\in\F_q^n\mid d(\bw,\bz)\leq t\}$ and $|B_t(\bw)|=V_q(n,t)=\sum_{i=0}^t(q-1)^i\binom{n}{i}$. A \textit{code} $C$ is a nonempty subset of $\F^n_q$ and it has \textit{minimum distance} $d_{\min}(C)=\min_{\bc_1,\bc_2\in C,\bc_1\neq\bc_2}d(\bc_1,\bc_2)$. Furthermore, $C$ is an \textit{$e$-error-correcting} if $d_{\min}(C)\geq2e+1$. Moreover, we denote the \textit{zero-word} $00\cdots0\in\F^n_q$ by $\nolla$ or $0^n$. Finally, notation $a^j$ means $j$ consecutive symbols $a$ and we sometimes may concatenate these for a notation such as $0^i10^j$ which would be a binary word of length $i+1+j$ of weight one where the single symbol $1$ is the $(i+1)$th symbol. In a \textit{substitution error} a symbol in some coordinate position is substituted with another symbol, in an \textit{insertion error} a new symbol is inserted to the original word leading to a word of length $n+1$ and in a \textit{deletion error} a symbol is deleted from the original word leading to a word of length $n-1$. Each of these three types of errors is relevant for DNA-memories \cite{heckel2019characterization}.

For the rest of the paper, we assume the following: $C\subseteq \F^n_q$ is a  code, a \textit{transmitted word} $\bx\in C$ is sent through $N$ channels in which  insertion, deletion and substitution errors may occur and the number of each type of error is limited by some constant $t_i,$ $t_d$ or $t_s$, respectively. When the error type is clear from the context, we drop the subscript. In some cases we use indices for the notation of individual error types. In many previous works, it has been assumed that each channel gives a different output word. We refer to this model of the problem as a \emph{traditional} Levenshtein's channel model.
However,  in this paper,  
we usually assume instead that in each channel a different set of errors occurs to the transmitted word $\bx$ which is a natural  assumption as we will see in Section \ref{SecErrorPat}. 

In our model of Levenshtein's sequence reconstruction problem, a (multi)set of output words $Y$ is received through $N$ channels. Based on $Y$, we deduce the transmitted word $\bx$. However, this is sometimes impossible and we have to instead settle for a list of possible transmitted words $T(Y)$ such that $\bx\in T(Y)$. The maximum size of this list over all $\bx$ and $Y$ is denoted by $\LL$. 
The channel model is illustrated in Figure \ref{LevenshteinFig}.

\begin{figure}[tp]
	\centering
	\includegraphics[trim={0.5cm 22.77cm 0 0.81cm},clip,scale=0.40]{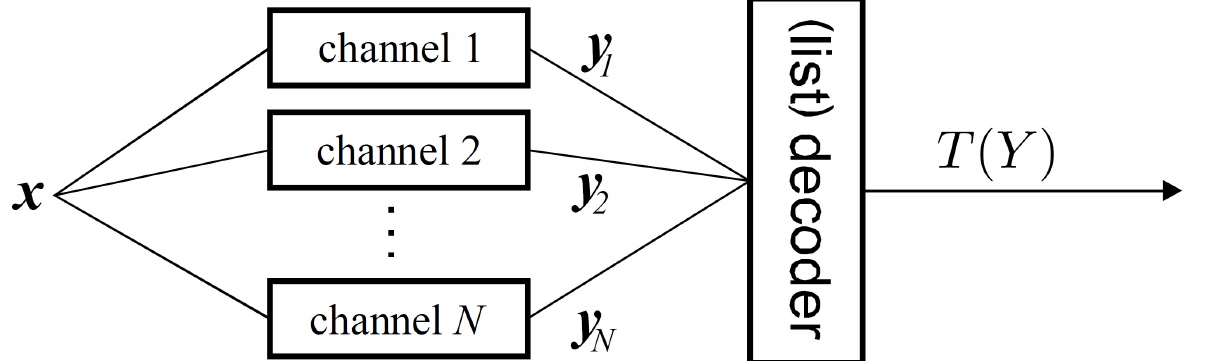}
	\centering\caption{The Levenshtein's sequence reconstruction.} \label{LevenshteinFig}
\end{figure}

If we have an $e$-error-correcting code $C$ and only substitution errors (and at most $t=e+\ell$) occur, then we have $$T(Y)=C\cap\bigcap_{\by\in Y} B_t(\by).$$ In this setup, the maximum size $\LL$ of $T(Y)$ has been studied in \cite{Levenshtein, yaakobi2018uncertainty, junnila2022levenshtein, junnila2020levenshtein} and it is well understood, for example, when $\LL$ is a constant. 
Especially, if $C = \Z_q^n$, $\LL = 1$ and exactly $t$ errors occur in a channel, then the required number of channels is given in the following theorem, which is based on Lemma~2 and Equation~(36) of \cite{levenshtein2001efficient}.

\begin{theorem}[\cite{levenshtein2001efficient}]\label{TheLevDelBound}
Let exactly $t\leq n-2$ deletion errors occur in the traditional Levenshtein's channel model and $C=\F^n_2$. Then $\LL=1$ if and only if $N\geq 2\sum_{i=0}^{t-1}\binom{n-t-1}{i}+1.$
\end{theorem}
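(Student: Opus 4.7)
The plan is to reduce the theorem to an extremal question about deletion balls. Write $D_t(\bw) \subseteq \F^{n-t}_2$ for the set of all length-$(n-t)$ words obtainable from $\bw$ by exactly $t$ deletions. In the traditional channel model with $C = \F^n_2$, the multiset $Y$ of outputs consists of $N$ pairwise distinct elements of $D_t(\bx)$, and a word $\bz \in \F^n_2$ lies in the decoded list $T(Y)$ if and only if $Y \subseteq D_t(\bz)$. Consequently $\LL = 1$ fails exactly when there exist distinct $\bx, \by \in \F^n_2$ and an $N$-subset $Y \subseteq D_t(\bx) \cap D_t(\by)$. Setting
\[
    \cN(n,t) \;:=\; \max_{\bx \neq \by \in \F^n_2} |D_t(\bx) \cap D_t(\by)|,
\]
this gives $\LL = 1 \iff N > \cN(n,t)$, and the theorem is equivalent to the identity $\cN(n,t) = 2\sum_{i=0}^{t-1}\binom{n-t-1}{i}$.

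For the lower bound I would exhibit an explicit extremal pair, the most transparent choice being the two alternating binary words $\bx = 0101\cdots$ and $\by = 1010\cdots$ of length $n$ (or an adjacent-swap variant $\bu\,01\,\bv$ versus $\bu\,10\,\bv$, depending on which is more convenient for counting). A common $t$-deletion descendant of such a pair is itself almost alternating; classifying these descendants by the positions of the deleted symbols in each string and counting pairs of deletion patterns that collapse to the same word yields exactly $2\sum_{i=0}^{t-1}\binom{n-t-1}{i}$ common descendants, the factor of $2$ reflecting the two symmetric phases. For the matching upper bound I would invoke Lemma~2 of \cite{levenshtein2001efficient}, which expresses $|D_t(\bx) \cap D_t(\by)|$ in closed form in terms of the positions where $\bx$ and $\by$ differ and the lengths of the runs surrounding those positions, and then maximise this closed form over all distinct binary pairs.

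The main obstacle is precisely this maximisation step: Levenshtein's closed-form expression is not monotone in any obvious parameter, so ruling out all other potentially extremal configurations requires a careful case analysis by the edit distance between $\bx$ and $\by$, combined with Vandermonde-style binomial identities to collapse the resulting sums to the claimed value. Once the maximum is identified and the hypothesis $t \leq n-2$ is used to guarantee that at least one distinct pair actually shares common $t$-deletion descendants (so that the right-hand side is non-degenerate), the equivalence $\LL = 1 \iff N \geq 2\sum_{i=0}^{t-1}\binom{n-t-1}{i}+1$ follows immediately from the reduction of the first paragraph.
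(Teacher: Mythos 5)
First, a point of reference: the paper itself gives no proof of this statement; it is quoted as background and explicitly attributed to Lemma~2 and Equation~(36) of \cite{levenshtein2001efficient}, so your proposal can only be compared with Levenshtein's original argument. Your reduction is the correct framework and matches it: with $C=\F_2^n$ and exactly $t$ deletions, $\LL=1$ is equivalent to $N>\cN(n,t)=\max_{\bx\neq\by}|D_t(\bx)\cap D_t(\by)|$, so the theorem amounts to the identity $\cN(n,t)=2\sum_{i=0}^{t-1}\binom{n-t-1}{i}$. Your lower-bound construction is also fine: for the alternating pair, a greedy (leftmost) embedding argument shows that a word of length $n-t$ is a common subsequence of $0101\cdots$ and $1010\cdots$ exactly when it has at most $t-1$ equal adjacent pairs, and counting by the first symbol and the positions of the repeats gives precisely $2\sum_{i=0}^{t-1}\binom{n-t-1}{i}$; the adjacent-swap pair $0101\cdots01$, $1001\cdots01$ (the extremal pair recorded in Remark~\ref{RemComparison}, taken from the proof of Lemma~1 of \cite{levenshtein2001efficient}) attains the same value. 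You assert rather than carry out this count, but it is correct.

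The genuine gap is the upper bound, which is the entire content of the theorem. You propose to ``invoke Lemma~2 of \cite{levenshtein2001efficient}'' and then maximise its closed form over all distinct pairs, and you yourself identify that maximisation as the main obstacle without performing it. Since the theorem being proved \emph{is} the statement of that reference (derived there from that very lemma), outsourcing the maximum to it makes the proposal circular as a standalone proof; no independent argument is given that no pair $\bx\neq\by$ has more than $2\sum_{i=0}^{t-1}\binom{n-t-1}{i}$ common $t$-deletion descendants. A complete argument must (i) show that the maximum is attained by pairs whose longest common subsequence has length $n-1$ (i.e., pairs related by an adjacent transposition of distinct symbols), and (ii) for such pairs, count the common descendants and show the count is maximised when the surrounding word alternates; neither step appears in your sketch, and this is exactly the non-trivial part of Levenshtein's proof. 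A minor additional inaccuracy: the hypothesis $t\leq n-2$ is not needed to guarantee that some distinct pair shares common descendants (that holds even for $t=n-1$); its role lies inside the extremal counting inherited from \cite{levenshtein2001efficient}, not in the non-degeneracy of the right-hand side as you suggest.
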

The previous theorem gives the exact value for $N$ only when $q=2$. However, Levenshtein gave the number of channels also for cases with $q>2$. When $q=3$ an exact number is presented in \cite{levenshtein2001efficient} but when $q\geq4$, we know only a recursive formulation. Note that the case with $q=4$ has special relevance due to DNA related applications. In the following theorem, which is based on Equation (51) and Theorem 3 of \cite{levenshtein2001efficient}, the same question is discussed in the case of insertion errors.

\begin{theorem}[\cite{levenshtein2001efficient}]\label{TheLevInsBound}
Let exactly $t$ insertion errors occur in the traditional Levenshtein's channel model and $C=\F^n_q$. Then $\LL=1$ if and only if $N\geq \sum_{i=0}^{t-1}\binom{n+t}{i}(q-1)^i(1-(-1)^{t-i})+1.$
\end{theorem}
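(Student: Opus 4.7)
The plan is to reduce unique reconstruction to a maximum pairwise intersection problem for insertion balls, and then to evaluate this maximum via a combinatorial identity of Levenshtein. Let $I_t(\bx)\subseteq\F_q^{n+t}$ denote the set of words obtained from $\bx\in\F_q^n$ by exactly $t$ insertions. In the traditional channel model the $N$ received words are distinct and each lies in $I_t(\bx)$. Thus unique decoding fails exactly when there exist distinct $\bx,\bx'\in\F_q^n$ together with $N$ distinct received words all lying in $I_t(\bx)\cap I_t(\bx')$; equivalently,
\[
\LL=1 \iff N \geq 1+\max_{\bx\neq\bx'}\bigl|I_t(\bx)\cap I_t(\bx')\bigr|.
\]
The task reduces to computing the right-hand side exactly.

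Next, I would identify the pairs $(\bx,\bx')$ attaining the maximum. A standard monotonicity argument shows that $|I_t(\bx)\cap I_t(\bx')|$ is largest when the two codewords are as similar as possible in the edit-distance sense, so that insertions have maximal freedom to reconcile them; concretely, the extremal pairs are those whose longest common subsequence has length $n-1$, i.e.\ they agree except possibly in one ``free'' position. This is the classical reduction step used throughout Levenshtein's analysis and can be verified by a direct supersequence comparison.

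Then, fixing such a worst-case pair, say with $x_j=a\neq b=x'_j$, I would count the common $t$-supersequences $\bw\in\F_q^{n+t}$ of $\bx$ and $\bx'$ by conditioning on the number $i$ of inserted symbols that appear to the left of the position in $\bw$ used to embed the differing symbol. Each such $\bw$ contributes a factor $\binom{n+t}{i}$ for the placement of the $i$ left-insertions among all $n+t$ positions, a factor $(q-1)^i$ for the freedom in choosing inserted symbols different from the underlying codeword symbols, and an inclusion--exclusion factor $(1-(-1)^{t-i})$ that records whether $a$ and $b$ are embedded at the same coordinate of $\bw$ or at two distinct coordinates. Summing over $i$ yields precisely
\[
\sum_{i=0}^{t-1}\binom{n+t}{i}(q-1)^i\bigl(1-(-1)^{t-i}\bigr),
\]
and the claimed bound on $N$ follows immediately from the reformulation.

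The main obstacle is the second step: obtaining the closed form requires a careful parametrization of the pairs of embeddings of $\bx$ and $\bx'$ into $\bw$, together with a Vandermonde-style cancellation that produces the alternating factor $(1-(-1)^{t-i})$ and eliminates the $i=t$ term. This is exactly the content of Equation~(51) and Theorem~3 in \cite{levenshtein2001efficient}; once that identity is in hand and the extremal pair structure is verified, both directions of the equivalence are immediate.
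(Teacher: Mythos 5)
This theorem is not proved in the paper at all: it is quoted verbatim from \cite{levenshtein2001efficient} (Equation~(51) and Theorem~3 there), so there is no internal proof to compare against. Your opening reduction is sound and is indeed the standard one: with $C=\F_q^n$, distinct outputs and exactly $t$ insertions, $\LL=1$ holds precisely when $N$ exceeds $\max_{\bx\neq\bx'}|I_t(\bx)\cap I_t(\bx')|$, and both directions of that equivalence are easy. The problem is that everything after this reduction -- which is the entire mathematical content of the theorem -- is asserted rather than proved. The claim that the extremal pairs are exactly those with longest common subsequence of length $n-1$ is not a one-line ``direct supersequence comparison''; it rests on the nontrivial fact that the number of common $t$-supersequences of two words depends only on $n$, $t$, $q$ and the LCS length, and is monotone in the latter, which is itself part of Levenshtein's analysis. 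More seriously, the counting sketch for the closed form does not work as described: the factor $(1-(-1)^{t-i})$ equals $0$ or $2$ according to the parity of $t-i$, so it cannot ``record whether $a$ and $b$ are embedded at the same coordinate or at two distinct coordinates'' (a case distinction would add terms, not multiply by a parity-dependent $0$ or $2$), and choosing $i$ insertion positions freely among all $n+t$ coordinates via $\binom{n+t}{i}$ is inconsistent with simultaneously constraining them to lie left of a designated embedding position. In Levenshtein's derivation the alternating factor comes out of a recursion/inclusion--exclusion over common supersequences, not from such a placement count.

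In the end you concede that the identity ``is exactly the content of Equation~(51) and Theorem~3'' of the reference, so your proposal reduces to the same citation the paper makes, wrapped in a reduction step that is correct but routine. As a standalone proof it therefore has a genuine gap: the extremal-pair characterization and the evaluation of $\max_{\bx\neq\bx'}|I_t(\bx)\cap I_t(\bx')|$ are never established, and the heuristic offered in their place would not survive being made precise. A small sanity check (e.g.\ $q=2$, $n=2$, $t=1$, where the maximum intersection is $2$) confirms the quoted formula, but verifying instances is not a substitute for deriving it.
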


The relationship between the number of channels and the list size for deletion and insertion errors has  been recently studied in \cite{abu2023intersection}. Similarly to the article \cite{abu2023intersection}, we also mainly restrict in Section~\ref{SecErrorPat} our considerations in this paper to cases, where $C=\F^n_q$. Studying the problems discussed in this paper together with a code $C\neq \F^n_q$ would be interesting although more complicated. 

For combinations of different error types even less is known \cite{Uusi_Maria_Abu-Sini}. Moreover, when only deletion or insertion errors occur, there are open problems, for example, when the code $C$ or the list size $\LL>1$ \cite{gabrys2018sequence, Maria_Abu-Sini, Uusi_Maria_Abu-Sini, goyal2022sequence, levenshtein2001efficient}. Decoding algorithms for substitution errors have been studied in \cite{Levenshtein, yaakobi2018uncertainty, Uusi_Maria_Abu-Sini} and for deletion and insertion errors in \cite{gabrys2018sequence, levenshtein2001efficient, Uusi_Maria_Abu-Sini}.

Similar problem has also been considered, in some cases under the name \textit{trace reconstruction}, when each deletion/insertion/substitution has an independent probability to occur, that is, the maximum number of errors in a channel is not fixed unlike in the traditional model and our model. For example, with deletion errors, this would mean that each symbol has an independent probability of $\rho$ to be deleted in a channel. So, if only deletion errors occur, the chance to obtain the empty word is $\rho^n$; see, for example, \cite{batu2004reconstructing, cheraghchi2020coded, viswanathan2008improved, chen2022near}.

The structure of this paper is as follows: In Section \ref{SecErrorPat}, we consider the reconstruction problem when each channel has a unique error pattern. In particular, in Section \ref{sec:delVec}, we consider deletion error patterns. In Section \ref{sec:decoding} we continue by giving a fast online decoding algorithm when insertions, deletions and substitutions occur in channels. It requires only a linear time on the total length of (read) output words and does not always need to read every output word. It is likely that the algorithm returns a non-empty output, and the output of the algorithm is \emph{always correct} when it is non-empty. 


\section{Error patterns}\label{SecErrorPat}
Levenshtein's traditional channel model  has  usually been  considered under the assumption that every channel gives a unique output word \cite{levenshtein2001efficient, Uusi_Maria_Abu-Sini, junnila2020levenshtein}. In the case of deletion and insertion errors, this assumption of the model significantly restricts  the number of channels (depending on the transmitted word).
In this section, we consider different error types but our discussions often utilize deletion errors in the examples.
 Furthermore, for deletion errors and any $n,t\geq1$, there exist (as pointed out in \cite{levenshtein2001efficient}) cases in which we cannot deduce the transmitted word. This is also illustrated in the next example.
 
 \begin{example}\label{exDelTrad}
 	Let $\bx=111100$ and the number of deletions be \textit{exactly} $t=1$. If we require that each channel gives a unique output word, then we can have only two channels and $Y\subseteq B_1^d(\bx)\setminus\{\bx\}=\{11100,11110\}$ where $B_1^d(\bx)$ is the deletion ball of radius $1$, that is, the set of words which can be obtained from $\bx$ by removing at most one symbol. Moreover, if we consider a word $\bx'=111010$, then  $Y'\subseteq  B_1^d(\bx')\setminus\{\bx'\}=\{11010,11110,11100,11101\}$. In particular, $ B_1^d(\bx)\setminus\{\bx\}\subset  B_1^d(\bx')\setminus\{\bx'\}$. Consequently, if we only know the set $Y$, then we cannot say  in the traditional model with certainty whether the transmitted word is $\bx$ or $\bx'$! 
 \end{example}

Another challenge  for the traditional model  is, as we see above, that for deletion balls we may have $| B_1^d(\bx)\setminus\{\bx\}|\neq | B_1^d(\bx')\setminus\{\bx'\}|$.   In general, the size of the deletion ball depends on the choice of the central word \cite{levenshtein1966binary}.  In this section, we will introduce two new models (called a multiset model and a non-multiset model) which will help us with the problems of the traditional Levenshtein's model.  We also discuss why these models are natural for a wide range of parameters.

  \smallskip

Instead of assuming that every channel gives a unique output word, we will instead consider this problem with the assumption that each channel introduces a unique set of errors (a unique error pattern) to the transmitted word $\bx\in \F_q^n$. We mostly consider two types of error patterns. In the case of deletion errors, we introduce the concept of deletion vectors and for insertion errors, the concept of insertion vectors (introduced later in this section).
With a \textit{deletion vector} we mean a word $\bd\in \F_2^n$. When we apply the deletion vector $\bd$ to the transmitted word $\bx$, we obtain an output word $\by\in \F_q^{n-w(\bd)}$ which is formed from $\bx$ by deleting each $x_i$ if $d_i=1$. Now, in each channel, we apply a unique deletion vector of weight at most $t$ to the transmitted word $\bx$. Moreover, there are exactly $\binom{n}{t}$ possible deletion vectors when exactly $t$ errors occur and $V_2(n,t)$ possible deletion vectors when at most $t$ errors occur. Furthermore, in some cases, when we apply distinct deletion vectors $\bd$ and $\bd'$ to word $\bx$, we may obtain the same output word $\by$ possibly leading to a multiset of output words $Y_m$. When the set of output words can be a multiset, we will call the model \textit{multiset channel model} or \textit{multiset error pattern model}. 

In particular, the case with a unique error pattern in each channel can be considered as a \textit{generalization} of the traditional model  with unique output words. Indeed, if each output word is unique, then we have applied a unique error pattern in every channel. Moreover, if we assume that we have an output (multi)set $Y_m$ in which a different set of errors has occurred to every output word and two output words can be identical, then we could just prune this multiset $Y_m$ into a non-multiset by removing the extra copies. In other words, the error pattern model also contains the information we have in the traditional model.

Compared to the situation of the traditional model in Example~\ref{exDelTrad}, the concept of  unique deletion vectors in each channel gives new  insights and benefits  for these problems as we can see in the following example. 

\begin{example}\label{ExMulDelVec}
If we consider $\bx=111100$ and $\bx'=111010$ from Example \ref{exDelTrad} and apply different deletion vectors of weight exactly one to them, then the \textit{multiset} of output words we obtain from $\bx$ is $Y=\{11100,11100,11100,11100,11110,11110\}$ while the multiset of output words $Y'$ we obtain from $\bx'$ is $Y'=\{11010,11010,11010,11110,11100,11101\}$. Since the multisets differ, in this case we can now clearly distinguish between $\bx$ and $\bx'$ unlike in the traditional model.
Moreover, we could in fact verify that the output word multisets $Y$ and $Y'$ are unique for $\bx$ and $\bx'$, respectively (this follows from Theorem \ref{thm:delVecnAllnbound} since they have size at least five and $V_2(6,1)-V(2,1)+1=5$). 

Furthermore, the unique deletion vectors in our multiset model seem more natural when we consider this problem from a probabilistic perspective. Indeed, if we only assume that each \emph{unique} output word has equal probability to occur, then with $\bx=111100$ and $t=1$ both output words $11100$ and $11110$ have equal probability of $50\%$ to occur. However, if each deletion vector of weight one has equal probability, then $11100$ has $\approx 67\%$ probability  and $11110$ has $\approx 33\%$ probability which seems a more natural result.\end{example}

We may ask how many channels $N'$ do we probably require to obtain $N$ ($N\leq N'$) output words which have been modified by different error patterns. This probability depends on the error types occurring in the channels, on $n,N$ and $t$ for deletion errors and also on $q$ for non-deletion errors. 
Furthermore, when these parameters are suitable, it is probable that (almost) every channel has a unique error pattern and
more likely than each output being unique since there are usually more possible error patterns than output words. 
For example, when $n=N=100$ and $t=3$, we can expect roughly 100 (more precisely, $99.97...$) unique error patterns for deletion errors by Equation (\ref{Eq:ExpectedValueForUniqueEpatterns}). We will come back to exact probabilities later in Section \ref{Sec:Probabilities}. 

Besides the multiset model which requires that each error pattern occurring in a channel is unique, we also give another 
approach which does not cause challenges if some error patterns are identical as long as a predefined number of different error patterns occurs.
 We may assume that a unique set of errors occurs in each channel but instead of considering the multiset of output words we consider the set of output words. We call this model a \emph{non-multiset error pattern} model. 
 For this non-multiset model to work, we only require that in $N'$ channels we have unique error patterns, that is, we may utilize $N_s\geq N'$ channels and have identical error patterns in some of the $N_s$ channels as long as there are at least $N'$ unique error patterns. Furthermore, we \textit{do not} need to know which channels give the unique error patterns. Hence, we may use probabilistic approximation which states that if we have at least $N_s$ channels, then we are likely to have $N'$ unique error patterns. We discuss  these probabilities in greater detail in Section \ref{Sec:Probabilities} and see that our models can be confidently deployed for a wide range of parameters. 
 
To summarize the three channel models, in the \textbf{traditional channel model} each channel gives a unique output word and we have $|Y|=N$. In the \textbf{multiset error pattern channel model}, a unique set of errors occurs in each channel and we have $|Y_m|=N$ for a multiset $Y_m$ of output words. In the \textbf{non-multiset error pattern channel model}, a unique set of errors occurs in each channel but we consider a non-multiset $Y$ of output words. In particular, we may have $|Y|<N$. Notice that although in each of these cases we use the term '\textit{channel}', the meaning of the term channel is different between different models.
 \smallskip

In the following example, we compare the three channels models.

\begin{example}\label{ExComparison}
Consider a situation with $\bx=11101$ and $\bx'=11011$ when exactly $t=2$ deletion errors occur in a channel. The output words which can be obtained from both $\bx$ and $\bx'$ are $Y'=\{\by_1,\by_2,\by_3\}=\{111, 110,101\}$.  We have presented these output words together with all the deletion vectors that may lead to them in Table \ref{TabEvectEx}.

When we consider these words together with the traditional Levenshtein's model, we notice that if $\bx$ is transmitted, then we can never distinguish between $\bx$ and $\bx'$ since every output word which can be obtained from $\bx$, can also be obtained from $\bx'$. However, with the non-multiset and multiset error pattern models we can distinguish between these two words. The non-multiset model requires  (in the worst case) ten channels since we may obtain the output words in $Y'$ with nine deletion vectors from $\bx'$. Furthermore, the multiset model requires  (in the worst case) nine channels to distinguish between $\bx$ and $\bx'$ since we may obtain $\by_1$ with four deletion vectors, $\by_2$ with one deletion vector and $\by_3$ with three deletion vectors from both $\bx$ and $\bx'$, that is, in total with eight deletion vectors from both words.

Notice that in the multiset model we have more information available for us than in the non-multiset model. Hence, it is to be expected that the multiset model requires less channels than the non-multiset model.

\begin{table}[h]
\caption{The three output words $\y_i$ which we can obtain from both $\bx$ and $\bx'$ when exactly two deletion errors occur in each channel together with the deletion vectors that may lead to them.}\label{TabEvectEx}\centering
\begin{tabular}{|l||ll||ll|}
\hline
 & \centering $\bx=11101$          &  & \centering $\bx'=11011$         &  \\ \hline\hline
$\by_1=111$ & \multicolumn{1}{l|}{10010}   & 01010 & \multicolumn{1}{l|}{10100}   & 01100 \\ \hline
 & \multicolumn{1}{l|}{00110}   & 00011 & \multicolumn{1}{l|}{00110}   & 00101 \\ \hline\hline
$\by_2=110$ & \multicolumn{1}{l|}{10001} & 01001 & \multicolumn{1}{l|}{00011} &  \\ \hline
 & \multicolumn{1}{l|}{00101} &  & \multicolumn{1}{l|}{} &  \\ \hline\hline
$\by_3=101$ & \multicolumn{1}{l|}{11000} & 10100 & \multicolumn{1}{l|}{10010} & 10001 \\ \hline
 & \multicolumn{1}{l|}{01100} &  & \multicolumn{1}{l|}{01010} & 01001 \\ \hline
\end{tabular}
\end{table}
\end{example}

In Section \ref{sec:delVec}, we will consider \textit{extremal wordpairs} for deletion errors, that is, wordpairs which can be distinguished and which require the largest number of channels for distinguishing them. 
In particular, as long as $n\geq 2t+2$ (see Theorem \ref{thm:delVecnAllnbound}), there \emph{always} exists (unlike in the traditional model) a number of channels which is enough for distinguishing any wordpairs in non-multiset and multiset error pattern models. 
As we will see in Section \ref{sec:delVec} in comparison to \cite{levenshtein2001efficient},  the set of extremal wordpairs requiring the largest number of channels to distinguish them  differs in the case of deletion errors in the three models: the traditional Levenshtein's model, the multiset model and the non-multiset model. From the viewpoint of worst-case analysis, this seems interesting. Furthermore, we also show that in each of these cases the wordpair leading to the \textit{worst} case is different (see Remark \ref{RemComparison}). 
To separate between error patterns with and without multisets, we will use notations $N_m$  for the number of channels  and $\LL_m$  for the list size when we consider the multiset case. 

\medskip

Similarly to deletion errors, we could also introduce substitution vectors for \textit{substitution} errors. However, unlike in the case of deletion errors, two distinct substitution vectors would lead to distinct output words. In particular, if we know the transmitted word and the output word as well as that only substitution errors have occurred, then we can exactly deduce which substitution errors have occurred. Hence, in  the case of substitution errors it does not matter if we assume that every channel gives a unique output or if a unique set of substitution errors occur in a channel since both approaches lead to the same conclusion.

With an \textit{insertion vector} we mean an ordered set of $n+1$ ($q$-ary) words of total length at most $t$. We denote the empty word by $\varepsilon$. When an insertion error occurs, we insert, for each $1\leq i\leq n+1$, the $i$th word of the insertion vector after the $(i-1)$th symbol of the word $\bx$. Note that for $i=1$ by saying \emph{after the $0$th symbol} we mean before the first symbol.   Insertion errors are not as problematic when we assume that each channel outputs a different word. In fact, the size of an error ball of radius $t$ does not depend on the central word \cite{levenshtein2001efficient}. However, we still have some problems with the probabilities. Consider, for example, $\bx=000\in \F_2^3$ and exactly $t=1$ insertions. If we now assume (as in the traditional model) that each output word is unique, then $Y=\{0000,1000,0100,0010,0001\}$. However, if we assume that each channel has a unique error pattern, then $Y=\{0000,0000,0000,0000,1000,0100,0010,0001\}$. As we can see, the probability that we output the word $0000$ is $20\%$ in the first case and $50\%$ in the second case. Moreover, it seems natural that $0000$ is more likely than the other words to be outputted since there are four different ways to obtain it while the other words have only one. Let us denote by $B^I_t(\bx)$ the insertion ball for insertion vectors of radius $t$ centered at a word $\bx\in \F^n_q$. We have 
\begin{equation}\label{insBall}
|B^I_t(\bx)|=\sum_{i=0}^tq^i\binom{n+i}{i} .
\end{equation}
Indeed, let us consider the number of words which we can obtain from $\bx$ with exactly $j$ insertions such that $0\leq j\leq t$. Insertion vector consists of $n+1$ (possibly empty) words with total length of $j$. The answer to the question asking how many combinations there are for possible locations of inserted symbols is given by a classical combinatorial technique of \textit{stars and bars} \cite{levin2021discrete}. Indeed, this problem can be considered as having $n+1$ boxes and $j$ balls where the balls are inserted to the boxes. Thus, the technique of stars and bars tells that there are $\binom{n+j}{j}$ ways to insert balls into these boxes. Moreover, there are $q^j$ ways to choose the inserted symbols once their locations are known. Notice that the cardinality in \eqref{insBall} differs from the cardinality of an insertion ball considered in \cite{levenshtein2001efficient}.

\subsection{Probabilities}\label{Sec:Probabilities}
In this section, we briefly consider the probabilities on how many unique error patterns we might obtain when each error pattern is equally probable.
Consider a setup in which we have a collection of $m$ distinct coupons and each time we draw one coupon, it is replaced with a new one.
In a well-known \textit{Coupon Collector Problem} \cite{flajolet1992birthday}, we are asked how many coupons we need to buy randomly to collect at least one copy of every coupon. Furthermore, in the \textit{Partial Coupon Collector Problem} PCCP$(j,m)$, we are asked how many coupons we need to buy randomly to obtain $j$ different coupons from $m$ total coupon types. This setup corresponds to our problem with distinct error patterns in channels assuming that all error patterns are equally likely. Asking: ``Through how many channels do we need to transmit word $\bx$ to obtain $j$ distinct error patterns" is the same as PCCP$(j,m)$. In \cite{flajolet1992birthday}, the expected value for PCCP$(j,m)$ has been presented as:
 $$E[PCCP(j,m)]=m(H_m-H_{m-j})\approx m\ln\frac{m}{m-j},$$
where $H_m$ is the $m$th harmonic number $\sum_{i=1}^m\frac{1}{i}$. The approximation follows from  $\gamma + 1/(2m+2) + \ln m < H_m <\gamma + 1/(2m) + \ln m$, where $\gamma$ is the Euler-Mascheroni constant, see \cite{young199175}. When we consider deletion vectors and at most $t_d$ deletions occur in any channel, the value $m$ is $V_2(n,t_d)$. For insertion errors with at most $t_i$ insertions in a channel, the value $m$ is $|B_{t_i}^I(\bx)|=\sum_{j=0}^{t_i}q^j\binom{n+j}{j}$ by Equation (\ref{insBall}).

 Another way to consider this problem is: If we transmit the word $\bx\in \F^n_q$ through $N$ channels and one of $m$ equally likely error patterns may occur in any of them, what is the expected number of unique error patterns occurring in these channels? Observe that the likelihood of any single error pattern not occurring in any of the $N$ channels is $(\frac{m-1}{m})^N$. 
 Thus, a particular error pattern occurs in at least one of the channels with probability $1-(\frac{m-1}{m})^N$ and hence, the expected number of unique error patterns is  \begin{equation}\label{Eq:ExpectedValueForUniqueEpatterns}
 m\left(1-\left(\frac{m-1}{m}\right)^N\right).
 \end{equation}

\subsection{Deletion vectors}\label{sec:delVec}
In this subsection, we consider how many channels we may require to ensure that we can uniquely determine the transmitted word, that is $\LL=1$, when we have deletion vectors of weight at most (or exactly) $t$. 
We give two types of results. First we consider the non-multiset error pattern model and give the exact minimum number of different error patterns, that is, the minimum value for $N$ which guarantees different \textit{sets} of output words from two different transmitted words, that is, cases where $\LL=1$. 
Then, we consider the same problem for the \emph{multiset} model. 
In the following theorem, we provide a number of channels, which guarantees that $\LL=1$ in the non-multiset error pattern case. However, observe that Theorem \ref{thm:delVecnAllnbound} also holds for the multiset model and gives $\LL_m=1$, since if the sets of output words are different, then also multisets of the output words are different. In this section, we restrict our considerations to $C=\F^n_q$. In the following theorem we concentrate on the case with $q=2$. However, the same result holds also for larger $q$ (see the discussion after Theorem \ref{thm:delVecnAllnbound}).

\begin{theorem}\label{thm:delVecnAllnbound}
Let $t$, $n$ and $q$ be integers such that $t \geq 1$, $n \geq 2t+2 \geq 4$ and $q=2$.

\begin{enumerate}
\item[(i)] If at most $t$ deletion errors occur in a channel and the number of channels $N\geq V_2(n,t)-V_2(\lceil n/2\rceil-1,t)+1$, then the output word (multi)set $Y$ is unique for any transmitted word and $\LL=1$.
\item[(ii)] If exactly $t$ deletion errors occur in a channel and the number of channels $N\geq \binom{n}{t}-\binom{\lceil n/2\rceil-1}{t}+1$, then the output word (multi)set $Y$ is unique for any transmitted word and $\LL=1$.
\end{enumerate}
\end{theorem}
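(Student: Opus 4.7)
The plan is to reformulate the claim as a purely combinatorial bound, verify tightness on an extremal pair, and handle the general case by a case analysis on the first and last indices of disagreement.

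For distinct $\bx,\bx' \in \F^n_2$ and a deletion vector $\bd \in \F^n_2$, write $\bx/\bd$ for the word obtained from $\bx$ by deleting the coordinates indexed by $\supp(\bd)$, and set
\[
S(\bx,\bx') = \{\bd \in \F^n_2 \mid w(\bd) \le t,\ \bx/\bd \in B_t^d(\bx')\}.
\]
If the output sets of $\bx$ and $\bx'$ could coincide under some deletion-vector sets $D_\bx, D_{\bx'}$ of size $N$, every $\bd \in D_\bx$ would satisfy $\bx/\bd \in B_t^d(\bx')$, so $D_\bx \subseteq S(\bx,\bx')$ and symmetrically $D_{\bx'} \subseteq S(\bx',\bx)$, forcing $N \le \min(|S(\bx,\bx')|,|S(\bx',\bx)|)$. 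Since set equality implies multiset equality, it suffices to prove
\[
\min\bigl(|S(\bx,\bx')|,|S(\bx',\bx)|\bigr) \le V_2(n,t) - V_2(\lceil n/2\rceil-1, t)
\]
for every distinct $\bx,\bx' \in \F^n_2$. Part~(ii) then follows by repeating the same analysis restricted to vectors of weight exactly $t$, replacing $V_2(\cdot,t)$ by $\binom{\cdot}{t}$ throughout.

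For tightness I would analyze the family $\bx = 0^a 1 0^b$, $\bx' = 0^{a+1} 1 0^{b-1}$ with $a+b+1=n$. Splitting on whether $\supp(\bd)$ contains the unique $1$-coordinate of $\bx$, hits its trailing block of zeros, or is entirely contained in the leading block of zeros, one checks that $\bx/\bd \notin B_t^d(\bx')$ exactly when $\supp(\bd) \subseteq \{1,\dots,a\}$; hence $|S(\bx,\bx')| = V_2(n,t) - V_2(a,t)$, and a symmetric count gives $|S(\bx',\bx)| = V_2(n,t) - V_2(b-1,t)$. Taking $a = \lceil n/2\rceil-1$ and $b = \lfloor n/2\rfloor$ makes the minimum equal the right-hand side of the target bound, so the bound is tight.

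For the general case, let $i \le j$ be the first and last indices of disagreement between $\bx$ and $\bx'$. When the common prefix length $i-1$ is at least $\lceil n/2\rceil-1$, I would consider the $V_2(\lceil n/2\rceil-1, t)$ deletion vectors of $\bx$ with support inside $\{1,\ldots,\lceil n/2\rceil-1\}$; each such $\bx/\bd$ keeps coordinates $i,\ldots,n$ of $\bx$ intact, and a run-length count at the site of first disagreement (imitating the extremal computation) shows that the numbers of zeros and ones around the $i$-th coordinate of $\bx/\bd$ cannot be simultaneously accommodated in $\bx'$, forcing $\bx/\bd \notin B_t^d(\bx')$. The common-suffix case is symmetric. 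When both $i-1$ and $n-j$ are strictly below $\lceil n/2\rceil-1$, bit-complementation and string-reversal reduce to a small number of canonical configurations, each bounded either directly (the long central disagreement region forces $|B_t^d(\bx) \cap B_t^d(\bx')|$ to be small) or by a shift-the-lone-$1$ compression to the extremal family. The main obstacle is precisely this "both-short" sub-case: without a long agreement region, there is no obvious source of $V_2(\lceil n/2\rceil-1, t)$ distinguishing deletion vectors, so the argument must proceed through a careful bookkeeping of common subsequences of length at least $n-t$ and, likely, an induction on $n$ or on the number of runs, supported by the two natural symmetries above.
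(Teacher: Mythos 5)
Your reduction to bounding $\min(|S(\bx,\bx')|,|S(\bx',\bx)|)$ is sound and equivalent to the paper's setup, and your tightness computation for $0^a10^b$ versus $0^{a+1}10^{b-1}$ is correct (though it is not needed for this theorem; it is the content of the separate matching-construction theorem). However, the heart of your general-case argument has a genuine gap. First, the ``long common prefix'' sub-case is false as stated: it is not true that every deletion vector with support inside the common prefix yields an output unreachable from $\bx'$. For example, with $t=1$, $n=6$, $\bx=011010$ and $\bx'=010110$, the first disagreement is at position $3$ and the common prefix has length $2=\lceil n/2\rceil-1$, yet deleting the second coordinate of $\bx$ gives $01010$, which is also obtained from $\bx'$ by deleting its fourth coordinate. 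So no run-length count at the first disagreement can force $\bx/\bd \notin B_t^d(\bx')$ for all such $\bd$, and your claimed supply of $V_2(\lceil n/2\rceil-1,t)$ distinguishing vectors evaporates. Second, you explicitly leave the ``both prefix and suffix short'' sub-case as a hope (induction on $n$ or on runs), and that is not a peripheral case: for generic word pairs the agreement regions are short, so this is where the bulk of the work lies.

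The paper's proof avoids any case split on prefix/suffix lengths. It first proves that $w(\bx)=w(\bx')$ (otherwise the lighter word admits too many deletion vectors producing outputs with more zeros than $\bx$ can ever have). Then, letting $h$ be the first disagreement with, say, $x_h=0$ and $x'_h=1$, it considers the set $D_1$ of deletion vectors that delete from $\bx$ only ones at positions $\le h$ and zeros at positions $>h$, and the symmetric set $D_2$ for $\bx'$ (zeros at positions $\le h$, ones at positions $>h$). A counting argument on the number of ones preceding a distinguished zero (the zero at position $h$, respectively the first zero after $h$) shows that no output produced via $D_1$ from $\bx$ can be produced from $\bx'$ at all, and symmetrically for $D_2$; this is exactly the kind of structural restriction on the deletion supports that your ``support inside the prefix'' choice lacks. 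Finally, $|D_1|$ and $|D_2|$ equal $V_2(m_1,t)$ and $V_2(m_2,t)$ with $m_1+m_2=n-2$, so one of them is at least $V_2(\lceil n/2\rceil-1,t)$, which bounds the number of usable channels by $V_2(n,t)-V_2(\lceil n/2\rceil-1,t)$ and completes the proof. To repair your argument you would need to replace your prefix-based vector sets with sets of this type (and establish the equal-weight claim, which your counting at the disagreement site implicitly relies on).
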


\begin{proof}
In this proof, we only consider Case (i). However, the proof for (ii) follows by replacing  in the following proof each $V_2(a,b)$ by $\binom{a}{b}$  and by changing every weight constraint $w(\bd)\le t$ to $w(\bd)=t$ for a deletion vector $\bd$.

Let $N\geq V_2(n,t)-V_2(\lceil n/2\rceil-1,t)+1$. Note that we may apply $V_2(n,t)-V_2(\lceil n/2\rceil-1,t)+1$ unique deletion vectors on any word of length $n$ as the number of all such vectors is equal to $V_2(n,t)$. 
Suppose on the contrary that there exists a set of output words $Y$ which can be obtained by applying a set $D$ of $N$ deletion vectors to $\bx$ and also by applying  a set $D'$ of $N$ deletion vectors to $\bx'$. We first show that both $\bx$ and $\bx'$ have the same weight.

\noindent \textbf{Claim 1:} We have $w(\bx)=w(\bx')$. 

\noindent\textit{Proof of Claim 1.}
Let us suppose on the contrary, without loss of generality, that $w(\bx)>w(\bx')$. Moreover, let us first assume that $w(\bx)> n/2$. 
We denote $m=w(\bx)-w(\bx')$.

Let us denote by $D''$ a set of deletion vectors $\bd''$ with $w(\bd'')\leq t$, where we can have $d_i''=1$ if $x'_i=1$ or for at most $m-1$ indices $i$ for which $x'_i=0$ (in other words, deletion vectors delete some $1$'s and at most $m-1$ symbols $0$ from $\bx'$).  Then, we have $|D''|\geq V_2(m-1+w(\bx'),t)=V_2(w(\bx)-1,t)\geq V_2(\lfloor n/2\rfloor,t)$. Furthermore, we can observe that if we obtain $\by'$ from $\bx'$ with a deletion vector in $D''$, then $\by'$ has at least $n-w(\bx')-(m-1)=n-w(\bx)+1$ zeroes and we cannot obtain $\by'$ from $\bx$ with any deletion vector since $\bx$ has $n-w(\bx)$ zeroes. Thus, \begin{equation}\label{eq:difWeight}
|D'|\leq V_2(n,t)-|D''|\leq V_2(n,t)-V_2(\lfloor n/2\rfloor,t)<V_2(n,t)-V_2(\lceil n/2\rceil-1,t)+1\leq N,\end{equation} a contradiction. 

Moreover, the case $w(\bx)\leq n/2$ is similar. Indeed, in this case we may swap 1's and 0's in $\bx$ and $\bx'$. Now $w(\bx')> w(\bx)$ and $w(\bx')>n/2$. After this, we could apply above proof by swapping $\bx$ and $\bx'$, by switching $D'$ to $D$ and $m$ to $m'=w(\bx')-w(\bx)$.
Thus, Claim $1$ follows.

\medskip

Let us now assume that $h$ is the smallest index for which $x_h\neq x'_h$. Furthermore, without loss of generality, assume that $x_h=0$ and $x'_h=1$. Moreover, let us notate $w_1(\bw)=|\{i\mid i\leq h,i\in \supp(\bw)\}|$ and $w_2(\bw)=|\{i\mid i> h,i\in \supp(\bw)\}|$. We have $w_1(\bx)=w_1(\bx')-1$ and $w_2(\bx)=w_2(\bx')+1$, since $w(\bx)=w(\bx')$. 
There are $w_1(\bx)$ symbols $1$ and $h-w_1(\bx)$ symbols $0$ in $\bx$ before the $(h+1)$th coordinate. Moreover, in $\bx'$, before the $(h+1)$th coordinate, there are $w_1(\bx)+1$ symbols $1$ and $h-w_1(\bx)-1$ symbols $0$.

Let us consider the following deletion vector sets $D_1\subseteq \F_2^n$ and $D_2\subseteq \F_2^n$:
\begin{align*}
D_1=&\{\bd\in \F^n_2\mid w(\bd)\leq t, \supp(\bd)\cap[1,h]\subseteq \supp(\bx)\\
 &\text{and } (\supp(\bd)\cap[h+1,n])\cap \supp(\bx)=\emptyset\} \text{ and}\\
%
%
D_2=&\{\bd\in \F^n_2\mid w(\bd)\leq t, \supp(\bd)\cap[h+1,n]\subseteq \supp(\bx')\\
 & \text{and } (\supp(\bd)\cap[1,h])\cap \supp(\bx')=\emptyset\}.\end{align*}
In other words, we have $\bd\in D_1$ if and only if $w(\bd)\leq t$ and its support is such that 
for $d_i = 1$ we require that $x_i = 1$ and $i \leq h$, or $x_i = 0$ and $i > h$.
Similarly we have $\bd'\in D_2$ if and only if $w(\bd')\leq t$ and its support is such that for $d_i' = 1$ we require that $x_i' = 0$ and $i \leq h$, or $x_i' = 1$ and $i > h$.

\medskip

\noindent \textbf{Claim 2:} If we obtain $\by$ from $\bx$ with a deletion vector $\bd\in D_1$, then $\by$ cannot be obtained with any deletion vector from $\bx'$.

\noindent \textit{Proof  of Claim 2.} Let us assume that $\by$ is obtained from $\bx$ with a deletion vector $\bd\in D_1$. Suppose on the contrary that we can obtain $\by$ from $\bx'$ with some deletion vector $\bd'$ of weight at most $t$. We have $w(\bd)=w(\bd')$ as the original words $\bx$ and $\bx'$ are of equal length. 

Let us first consider how $\by$ can be obtained from $\bx$ with $\bd$. Consider the symbol $0$ in the $h$th coordinate (note that it is not deleted by $\bd$). In $\bx$, there are $h-w_1(\bx)-1$ symbols $0$ before it and $w_1(\bx)$ symbols $1$. Notice that $\bd$ deletes symbols $0$ from $\bx$ only after the symbol $0$ in the coordinate $h$. Hence, when we consider the symbol $0$ in $\by$ which has $h-w_1(\bx)-1$ symbols $0$ before it, we notice that it has exactly $w_1(\bx)-w_1(\bd)$ symbols $1$ before it.

Let us then consider how $\by$ can be obtained from $\bx'$ with $\bd'\in \F^n_2$. Consider the first symbol $0$ after the $h$th coordinate in $\bx'$. This symbol exists because $w(\bx)=w(\bx')$. In $\bx'$, there are $h-w_1(\bx)-1$ symbols $0$ before it and at least $w_1(\bx)+1$ symbols $1$. Since $w(\bx)=w(\bx')$ and $\bd$ deletes exactly $w_1(\bd)$ symbols $1$, we can delete at most $w_1(\bd)$ symbols $1$ from $\bx'$ with $\bd'$. Thus, there are at least $w_1(\bx)+1-w_1(\bd)$ symbols $1$ before the symbol $0$  which has $h-w_1(\bx)-1$ symbols $0$ before it in $\by$ obtained from $\bx'$. Notice that this kind of symbol $0$ must exist also in $\by$ when we obtain it from $\bx'$ since $\by$ has at least $h-w_1(\bx)$ symbols $0$ when we obtain it from $\bx$. Thus, word $\by$, which we obtained from $\bx$, is not identical with word $\by$ which we obtained from $\bx'$, a contradiction which proves Claim 2.
\medskip


Since $D_1$ and $D_2$ are constructed in symmetrical ways, Claim 2 also holds for $D_2$, that is, if we obtain $\by$ from $\bx'$ with $\bd'\in D_2$, then $\by$ cannot be obtained with any deletion vector from $\bx$. Indeed, to prove this, we consider the $w_1(\bx')-1$ symbols $1$ in $\bx'$ before the $h$th coordinate and then we  construct $\by$ which has $h-w_1(\bx')-w_1(\bd')$ symbols $0$ before the $w_1(\bx')$th symbol $1$. When we try to obtain this $\by$ from $\bx$, we notice that there are always at least $h-w_1(\bx')-w_1(\bd')+1$ zeroes before the $w_1(\bx')$th symbol $1$ (if it exists) since $\bd$ can delete at most $w_1(\bd')$ symbols $0$  from $\bx$ (recall that $w(\bx)=w(\bx')$) and there are, in $\bx$, $h-w_1(\bx')+1$ symbols $0$ before the $w_1(\bx')$th symbol $1$.

\noindent \textbf{Claim 3:} We have $|D_i|\geq V_2(\lceil n/2\rceil-1,t)$ for $i=1$ or $i=2$.

\noindent \textit{Proof  of Claim 3.}
There are $w_1(\bx)$ symbols $1$ and $n-h-w_2(\bx)$ symbols $0$ which we can remove with a deletion vector $\bd\in D_1$. Thus, $|D_1|= V_2(n-h-w_2(\bx)+w_1(\bx),t)$. Similarly, there are $h-w_1(\bx')=h-w_1(\bx)-1$ symbols $0$ and $w_2(\bx')=w_2(\bx)-1$ symbols $1$ which we can remove with deletion vector $\bd'\in D_2$. Thus, $|D_2|= V_2(h-w_1(\bx)-1+w_2(\bx)-1,t)$.

We split the proof between Cases A) $|D_1|\geq|D_2|$ and B) $|D_2|\geq|D_1|$.
Consider first Case A). 
Hence, $n-h-w_2(\bx)+w_1(\bx) \geq h-w_1(\bx)-1+w_2(\bx)-1$. Notice that $n-h-w_2(\bx)+w_1(\bx)+ (h-w_1(\bx)-1+w_2(\bx)-1)=n-2$. Since $|D_1|\geq|D_2|$, we have $n-h-w_2(\bx)+w_1(\bx)\geq \lceil\frac{n-2}{2}\rceil$. Thus, $$|D_1|=V_2(n-h-w_2(\bx)+w_1(\bx),t)\geq V_2(\lceil n/2\rceil-1,t)$$ as claimed. 

Case B) is similar. We have $n-h-w_2(\bx)+w_1(\bx) \leq h-w_1(\bx)-1+w_2(\bx)-1$ and $n-h-w_2(\bx)+w_1(\bx)+ (h-w_1(\bx)-1+w_2(\bx)-1)=n-2$. Since $|D_2|\geq|D_1|$, we have $h-w_1(\bx)-1+w_2(\bx)-1\geq \lceil\frac{n-2}{2}\rceil$. Thus, $$|D_2|=V_2(h-w_1(\bx)-1+w_2(\bx)-1,t)\geq V_2(\lceil n/2\rceil-1,t)$$ as claimed. 
Now, Claim $3$ follows.
\medskip

Let $\{v,w\}=\{1,2\}$ and $|D_v|\geq |D_w|$. If $v=1$, then $\bx_v=\bx$ and $\bx_w=\bx'$. If $v=2$, then $\bx_w=\bx$ and $\bx_v=\bx'$.
We have $|D_v|\geq V_2(\lceil n/2\rceil-1,t)$ by Claim $3$ and by Claim $2$ we cannot obtain the output words $\by$, which are obtained  from $\bx_v$ with $\bd\in D_v$,   with any deletion vector from $\bx_w$. Thus, we have $N=|D|\leq V_2(n,t)-|D_v|\leq V_2(n,t)-V_2(\lceil n/2\rceil-1,t)$. Therefore, $\LL= 1$ when $N\geq V_2(n,t)-V_2(\lceil n/2\rceil-1,t)+1$. 
\end{proof}

Observe that although we consider the binary case in the previous theorem, it is easy to see that the result applies also for larger alphabets. Indeed, consider distinct words $\bx,\bx'\in \F_q^n$ for $q>2$, and let $i\in[1,n]$ be such that $x_i\neq x_i'$. We partition $\F_q$ into non-empty sets $A$ and $B$ such that $x_i\in A$ and $x_i'\in B$, and let $f:\F_q^n\to\F^n_2$ transform any $q$-ary word to binary word by changing all symbols from $A$ to zeroes and symbols from $B$ to ones.  For example, we may choose $A = \{x_i\}$ and $B = [0,q-1] \setminus \{x_i\}$. Note that $f(\bx)\neq f(\bx')$. Now, we may observe that if a deletion vector $\bd$ turns $\bx$ and $\bx'$ into the same word $\by$, then $\bd$ transforms both $f(\bx)$ and $f(\bx')$ into $f(\by)$. 
Hence, the lower bound of Theorem \ref{thm:delVecnAllnbound} applies also for larger alphabets.

In the subsequent theorem, we see that the lower bounds of Theorem~\ref{thm:delVecnAllnbound}  are tight (also for larger alphabets) in the non-multiset error pattern model. This is done by showing that a suitably chosen pair $\bx,\bx'$ form an extremal wordpair for the non-multiset model.

\begin{theorem}\label{thm:delVecnAllNormalSet}
Let $n\geq2t+2\geq4$ and $C=\F^n_q$. Consider words $\bx=0^{\lceil n/2\rceil-1}10^{\lfloor n/2\rfloor}$ 
and $\bx'=0^{\lceil n/2\rceil}10^{\lfloor n/2\rfloor-1}$.
\begin{enumerate}
\item[(i)] If at most $t$ deletion errors occur in a channel and the number of channels $N\leq V_2(n,t)-V_2(\lceil n/2\rceil-1,t)$, then we can obtain the same output word set $Y$  from $N$ channels with input words $\bx$ and $\bx'$.
\item[(ii)] If exactly $t$ deletion errors occur in a channel and the number of channels $N\leq \binom{n}{t}-\binom{\lceil n/2\rceil-1}{t}$, then we can obtain the same output word set $Y$  from $N$ channels with input words $\bx$ and $\bx'$.
\end{enumerate}
%
\end{theorem}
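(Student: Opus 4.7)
The plan is to exhibit explicit sets $D,D'$ of $N$ deletion vectors applied to $\bx$ and $\bx'$, respectively, whose output sets coincide. I focus on case (i); case (ii) is analogous with each $V_2(\cdot,t)$ replaced by $\binom{\cdot}{t}$ and weight-$\leq t$ vectors restricted to weight exactly $t$.

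First, I call a deletion vector $\bd \in \F^n_2$ of weight at most $t$ \emph{bad for $\bx$} if $\supp(\bd) \subseteq [1, \lceil n/2 \rceil - 1]$, i.e.\ $\bd$ deletes only positions strictly to the left of the 1 in $\bx$; otherwise it is \emph{good}. There are exactly $V_2(\lceil n/2\rceil - 1, t)$ bad vectors, giving $N = V_2(n,t) - V_2(\lceil n/2\rceil - 1, t)$ good vectors. The key claim is that the outputs of the good-for-$\bx$ vectors are precisely $Y_{\bx} \cap Y_{\bx'}$, where $Y_{\bx}, Y_{\bx'}$ denote all outputs achievable from $\bx$ and $\bx'$ with at most $t$ deletions. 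For the ``bad $\Rightarrow$ unreachable from $\bx'$'' direction: a bad $\bd$ leaves the last $\lfloor n/2\rfloor$ symbols untouched, producing an output of the form $0^a 1 0^{\lfloor n/2 \rfloor}$, but $\bx'$ has only $\lfloor n/2\rfloor - 1$ zeros following its 1 (and deleting that 1 yields an all-zero word). For ``good $\Rightarrow$ reachable from $\bx'$'' I split into subcases: if $\bd$ deletes the 1 in $\bx$, its output $0^{n - w(\bd)}$ is reproduced from $\bx'$ by deleting the 1 of $\bx'$ along with $w(\bd) - 1$ other zeros; if $\bd$ preserves the 1 but deletes at least one position in $[\lceil n/2\rceil + 1, n]$, the resulting output $0^a 1 0^b$ satisfies $a \leq \lceil n/2\rceil - 1$ and $b \leq \lfloor n/2\rfloor - 1$, and an explicit compatible vector $\bd'$ for $\bx'$ is obtained by deleting $\lceil n/2\rceil - a$ zeros from the initial $\lceil n/2\rceil$ positions and $\lfloor n/2\rfloor - 1 - b$ zeros from the final $\lfloor n/2\rfloor - 1$ positions, whose total weight equals $w(\bd) \leq t$.

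Taking $D$ to be the set of all $N$ good vectors for $\bx$, the resulting output set $Y$ equals $Y_\bx \cap Y_{\bx'}$. For the $\bx'$ side, the symmetric analysis yields $V_2(n,t) - V_2(\lfloor n/2\rfloor - 1, t)$ vectors that are good for $\bx'$, each producing an output in $Y$; this count is at least $N$ since $\lceil n/2\rceil \geq \lfloor n/2\rfloor$. Moreover, by the symmetric claim every $\by \in Y$ is produced by at least one good-for-$\bx'$ vector. Since $|Y| \leq |D| = N$, I may choose $D'$ to consist of any $N$ distinct good-for-$\bx'$ vectors that together cover all of $Y$: pick one covering vector for each element of $Y$ and then pad to size $N$ with arbitrary additional good-for-$\bx'$ vectors. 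Because all such vectors produce outputs inside $Y$, the output set of $D'$ on $\bx'$ is exactly $Y$, finishing the construction.

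The main technical obstacle is the case analysis establishing the key claim---in particular the explicit matching construction in the subcase where the good vector preserves the 1---together with the inequality $\lceil n/2 \rceil \geq \lfloor n/2 \rfloor$ that guarantees enough good-for-$\bx'$ vectors exist on the other side. The rest is routine counting of supports, and case (ii) requires only substituting $\binom{\cdot}{t}$ for $V_2(\cdot,t)$ throughout.
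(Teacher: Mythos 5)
Your proposal is correct and follows essentially the same route as the paper: the same extremal pair $\bx,\bx'$, the same set $D$ of deletion vectors for $\bx$ (those whose support meets $[\lceil n/2\rceil,n]$), and the same verification that every such output is reachable from $\bx'$ while the remaining outputs (with $\lfloor n/2\rfloor$ trailing zeros) are not. The only difference is cosmetic: the paper exhibits an explicit symmetric set $D'$ of exactly $N$ vectors and describes the common output set $Y$ directly, whereas you obtain $D'$ by a covering-plus-padding count among the good-for-$\bx'$ vectors, which amounts to the same argument.
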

\begin{proof}
Again, in this proof, we only consider Case (i) and Case~(ii) can be shown by replacing  in the following proof each $V_2(a,b)$ by $\binom{a}{b}$  and by changing every weight constraint $w(\bd)\le t$ to $w(\bd)=t$ for a deletion vector $\bd$.

Let us transmit a word $\bx\in\F^n_q$ with $\supp(\bx)=\{\lceil n/2\rceil\}$ and $x_{\lceil n/2\rceil}=1$ through $N$ channels. It is enough to consider $N=V_2(n,t)-V_2(\lceil n/2\rceil-1,t)$ channels. Let $\bx'\in \F^n_q$ be a word such that $\supp(\bx')=\{\lceil n/2\rceil+1\}$ and $x_{\lceil n/2\rceil+1}'=1$. Let us consider following sets of deletion vectors: 
$$D=\{\bd\mid w(\bd)\leq t, \supp(\bd)\cap[\lceil n/2\rceil,n]\neq \emptyset\}$$ 
and  (notice that $\lceil n/2 \rceil - \lfloor n/2 \rfloor$ is 0 or 1 depending on the parity of $n$)
$$D'=\{\bd\mid w(\bd)\leq t, \supp(\bd)\cap[1+(\lceil n/2\rceil-\lfloor n/2\rfloor),\lceil n/2\rceil+1]\neq \emptyset\}.$$

Observe that $|D|=|D'|=V_2(n,t)-V_2(\lceil n/2\rceil-1,t)$. Indeed, let us consider first the set $D$. There are $V_2(n,t)$ different vectors of weight at most $t$. Moreover, those vectors belong to $D$ unless their support is within the set $[1,\lceil n/2\rceil-1]$ and there are $V_2(\lceil n/2\rceil-1,t)$ such vectors. 
Similarly, it can be shown that $|D'| = V_2(n,t) - V_2(\lceil n/2\rceil-1,t)$.

Let us first consider the set of output words $Y$ which we obtain from $\bx$ with $D$. We have 
\begin{align*}
Y=&\{\by\in\{0,1\}^h\mid n-t\leq h\leq n-1, w(\by)\leq1\\
&\text{and } \supp(\by)\subseteq [h+1-\lfloor n/2\rfloor, \lceil n/2\rceil]\}.
\end{align*}
Indeed, we delete at least one and at most $t$ symbols and hence, $\by\in \{0,1\}^h$ for  $n-t\leq h\leq n-1$. Moreover, $w(\by)\leq 1$, since it is possible that we delete the only symbol $1$ in $\bx$. Finally, if $w(\by)=1$, then we deleted at least one symbol $0$ after the symbol $1$ in $\bx$ and hence, $\supp(\by)\subseteq [h-(\lfloor n/2 \rfloor - 1), \lceil n/2\rceil]$. In other words, there are at most $\lfloor n/2\rfloor-1$ symbols $0$ after the symbol $1$.

Let us then consider the set of output words $Y'$ which we obtain from $\bx$ with $D'$. Similarly to previous case, it is easy to check that $Y'=Y$.
%
Thus, the claim follows.\end{proof}

Let us next consider deletion vectors in the \emph{multisets} model. Now we use multisets of output words to distinguish between two possible transmitted words.
Recall that the lower bounds of Theorem~\ref{thm:delVecnAllnbound} hold also for the multiset model. 
Theorem \ref{thm:delVecnAllnbound}(i) is tight in the case of multisets when $t=1$. 

\begin{proposition}\label{DelVecMult1}
Let $t=1$ and $q\geq2$. If $N_m\leq \lceil n/2\rceil+1=V_2(n,1)-V_2(\lceil n/2\rceil-1,1)$, then $\LL_m\geq2$.
\end{proposition}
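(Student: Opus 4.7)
The plan is to exhibit an explicit pair of distinct words $\bx \neq \bx'$ together with sets $D, D'$ of weight-$\leq 1$ deletion vectors, each of cardinality $V_2(n,1)-V_2(\lceil n/2\rceil-1,1) = \lfloor n/2\rfloor+1$, such that applying $D$ to $\bx$ and $D'$ to $\bx'$ yields the same multiset of output words. The natural choice is the same extremal pair as in Theorem~\ref{thm:delVecnAllNormalSet}, namely $\bx=0^{\lceil n/2\rceil-1}10^{\lfloor n/2\rfloor}$ and $\bx'=0^{\lceil n/2\rceil}10^{\lfloor n/2\rfloor-1}$, which already witnesses the corresponding lower bound in the non-multiset setting.

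First I would catalog the outputs of each of the $V_2(n,1)=n+1$ allowed deletion vectors. The zero vector preserves the input and thus cannot be used since $\bx\neq\bx'$. The remaining $n$ weight-one vectors partition according to whether the deleted coordinate lies strictly before the unique $1$, coincides with the $1$, or lies strictly after the $1$, and within each group the outputs coincide. A quick inspection shows that the only output words common to $\bx$ and $\bx'$ are the all-zero word $0^{n-1}$ (produced in each case by the single vector removing the $1$) and the word $\by := 0^{\lceil n/2\rceil-1}10^{\lfloor n/2\rfloor-1}$, produced from $\bx$ by each of the $\lfloor n/2\rfloor$ vectors deleting a zero to the right of the $1$, and from $\bx'$ by each of the $\lceil n/2\rceil$ vectors deleting a zero to the left of the $1$.

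Then I would construct matching multisets as follows. Let $D$ consist of the weight-one vector deleting the $1$ in $\bx$ together with all $\lfloor n/2\rfloor$ weight-one vectors whose support is contained in $[\lceil n/2\rceil+1, n]$, and let $D'$ consist of the weight-one vector deleting the $1$ in $\bx'$ together with any $\lfloor n/2\rfloor$ of the $\lceil n/2\rceil$ weight-one vectors whose support is contained in $[1,\lceil n/2\rceil]$. Both $D$ and $D'$ have size $\lfloor n/2\rfloor+1$, and both produce the multiset consisting of one copy of $0^{n-1}$ and $\lfloor n/2\rfloor$ copies of $\by$. Hence $\bx$ and $\bx'$ cannot be distinguished from this common multiset, giving $\LL_m\geq 2$ for this value of $N_m$; the same conclusion for any smaller $N_m$ follows by dropping channels from the construction.

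No step requires a subtle argument; the proof reduces to a direct counting exercise once the wordpair is fixed. The only minor point of care is to take the smaller count $\lfloor n/2\rfloor$ on the $\bx'$ side so that the two multisets agree exactly rather than merely being nested. The choice of $q\geq 2$ plays no role beyond ensuring the alphabet contains the symbols $0$ and $1$ appearing in $\bx,\bx'$, exactly as in Theorem~\ref{thm:delVecnAllNormalSet}.
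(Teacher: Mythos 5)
Your proposal is correct and follows essentially the same route as the paper: the same extremal pair $\bx=0^{\lceil n/2\rceil-1}10^{\lfloor n/2\rfloor}$, $\bx'=0^{\lceil n/2\rceil}10^{\lfloor n/2\rfloor-1}$, with $\lfloor n/2\rfloor+1$ weight-one deletion vectors on each side (the one deleting the $1$ plus those deleting zeros on the appropriate side), yielding the identical multiset of one copy of $0^{n-1}$ and $\lfloor n/2\rfloor$ copies of $0^{\lceil n/2\rceil-1}10^{\lfloor n/2\rfloor-1}$.
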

\begin{proof}
Let us transmit the words $\bx,\bx'\in\F_q^n$ with $\supp(\bx)=\{\lceil n/2\rceil\}$, $x_{\lceil n/2\rceil}=1$ and $\supp(\bx')=\{\lceil n/2\rceil+1\}$, $x_{\lceil n/2\rceil+1}'=1$ through $N_m=V_2(n,t)-V_2(\lceil n/2\rceil-1,t)=\lfloor n/2\rfloor+1$ channels. The claim follows by applying the sets of deletion vectors $D=\{\bd_i\mid  \supp(\bd_i)=\{n+1-i\}, i\in[1,\lfloor n/2\rfloor+1]\}$ and $D'=\{\bd'_i\mid \supp(\bd'_i)=\{i+\lceil n/2\rceil-\lfloor n/2\rfloor, i\in[1,\lfloor n/2\rfloor+1]\}\}$ to $\bx$ and $\bx'$, respectively. Indeed, we get the same multiset $Y_m$ (consisting of   $\lfloor n/2 \rfloor$ times the word $0^{\lceil n/2 \rceil -1}10^{\lfloor n/2 \rfloor-1}$ and once the word $0^{n-1}$) in both cases.
%
\end{proof}

Next we provide further results for the multiset model and compare it with the non-multiset one. We have decided to focus on even $n$ in the upcoming considerations to avoid extra complications, as the behaviour of the odd $n$ seems to be somewhat different.  Note that, in the following, we sometimes  concentrate on the case in which exactly $t$ errors occur in channels instead of a case in which at most $t$ errors occur.

As we can see in the following proposition, corollaries and especially in Remark \ref{Remt=2}, the construction which gives a tight bound for the case with $t=1$ \textit{does not} give a tight bound for $t=2$ for the multiset model.

\begin{proposition}\label{DelVecMultw1}
Let $t\geq1$, $q\geq2$ and $n\geq t+1$. Let us assume that exactly $t$ deletion errors occur in a channel in the multiset model. We can distinguish between
$\bx=0^{a-1}10^{n-a}$ and $\bx'=0^{a}10^{n-a-1}$
%
with $N=\binom{n}{t}-\binom{a-1}{\lfloor\frac{at+a}{n}\rfloor}\binom{n-a-1}{t-\lfloor\frac{at+a}{n}\rfloor}+1$ channels while $N-1$ is not enough.\end{proposition}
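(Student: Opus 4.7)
The plan is to find the largest multiset of outputs that $\bx$ and $\bx'$ can both produce, show its size equals the claimed $N-1$, and then conclude. Since $\bx=0^{a-1}10^{n-a}$ and $\bx'=0^{a}10^{n-a-1}$ each have weight one, every output has weight zero or one. The zero-weight output is $0^{n-t}$, arising from the $\binom{n-1}{t-1}$ deletion vectors (from either word) that delete the single 1. A weight-one output is determined by the position of the surviving 1: I would write $w_i := 0^{a-1-i}10^{n-a-t+i}$ and observe that it is produced from $\bx$ by exactly
$$
f(i) := \binom{a-1}{i}\binom{n-a}{t-i}
$$
deletion vectors (choose $i$ zeros to delete from the prefix of length $a-1$ and $t-i$ from the suffix of length $n-a$), and from $\bx'$ by exactly
$$
g(i) := \binom{a}{i+1}\binom{n-a-1}{t-i-1}
$$
vectors (delete $i+1$ zeros from the prefix of length $a$ and $t-i-1$ from the suffix of length $n-a-1$). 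No other output words are common to $\bx$ and $\bx'$. Consequently the largest common multiset has size $M := \binom{n-1}{t-1} + \sum_i \min(f(i), g(i))$, and the goal becomes $N = M+1$.

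A direct calculation yields
$$
\frac{f(i)}{g(i)} = \frac{(i+1)(n-a)}{a(t-i)},
$$
which is strictly increasing in $i$ and equals $1$ exactly at $i+1 = a(t+1)/n$. Setting $k := \lfloor a(t+1)/n \rfloor = \lfloor (at+a)/n \rfloor$, we get $\min(f(i),g(i)) = f(i)$ for $i \leq k-1$ and $\min(f(i),g(i)) = g(i)$ for $i \geq k$, so
$$
\sum_i \min(f(i), g(i)) = \sum_{i=0}^{k-1}\binom{a-1}{i}\binom{n-a}{t-i} + \sum_{i=k}^{t-1}\binom{a}{i+1}\binom{n-a-1}{t-i-1}.
$$

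The main technical obstacle is verifying the binomial identity
$$
\sum_{i=0}^{k-1}\binom{a-1}{i}\binom{n-a}{t-i} + \sum_{i=k}^{t-1}\binom{a}{i+1}\binom{n-a-1}{t-i-1} \;=\; \binom{n-1}{t} - \binom{a-1}{k}\binom{n-a-1}{t-k}.
$$
I would prove it by applying Pascal's rule $\binom{n-a}{t-i} = \binom{n-a-1}{t-i} + \binom{n-a-1}{t-i-1}$ inside the first sum, reindexing the resulting tail with $j = i+1$, and combining with the second sum via $\binom{a-1}{j-1} + \binom{a-1}{j} = \binom{a}{j}$; after the $j=k$ piece cancels, Vandermonde's identity $\sum_{j=0}^{t}\binom{a-1}{j}\binom{n-a}{t-j} = \binom{n-1}{t}$ completes the evaluation. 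Combined with $\binom{n-1}{t-1} + \binom{n-1}{t} = \binom{n}{t}$ this gives $M+1 = \binom{n}{t} - \binom{a-1}{k}\binom{n-a-1}{t-k} + 1$, matching the target.

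Finally, both halves of the claim follow. For sufficiency of $N$ channels, any two size-$N$ multisets of outputs coming from $\bx$ and $\bx'$ must differ, since any matching multiset has size at most $M = N-1$. For the insufficiency of $N-1$, I would exhibit explicit sets $D, D'$ of $M$ deletion vectors each by taking all $\binom{n-1}{t-1}$ vectors that delete the 1 from both words together with, for each weight-one output $w_i$, exactly $\min(f(i), g(i))$ vectors producing $w_i$ from $\bx$ and the same number producing $w_i$ from $\bx'$. The resulting output multisets coincide, so $\LL_m \geq 2$ with $N-1$ channels.
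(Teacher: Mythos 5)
Your proposal is correct and follows essentially the same route as the paper's proof: count the deletion vectors producing the all-zero and each weight-one output from both words, take the minimum with the threshold $k=\lfloor(at+a)/n\rfloor$ determined by the ratio $f(i)/g(i)$, and evaluate the resulting sum via Pascal's rule and Vandermonde's identity to get $\binom{n}{t}-\binom{a-1}{k}\binom{n-a-1}{t-k}$. Your explicit two-sided conclusion (sufficiency of $N$ and the matching construction for $N-1$) is exactly what the paper's computation implicitly establishes.
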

\begin{proof}
Let us transmit the words $\bx,\bx'\in\F_q^n$, for which $\supp(\bx) = \{a\}$, $\supp(\bx') = \{a+1\}$ and $x_a = x'_{a+1} = 1$, 
 through $N_m=\binom{n}{t}-\binom{a-1}{\lfloor\frac{at+a}{n}\rfloor}\binom{n-a-1}{t-\lfloor\frac{at+a}{n}\rfloor}$ channels. 

First of all, we notice that we can obtain the word $\nolla\in \F_q^{n-t}$ from both $\bx$ and $\bx'$ with $\binom{n-1}{t-1}$ deletion vectors each deleting the single $1$. Furthermore, we can obtain a word $\by_1$ with $\supp(\by_1)=\{a-i\}$ from word $\bx$ with $\binom{a-1}{i}\binom{n-a}{t-i}$ deletion vectors for $i\in[0,t-1]$ and from $\bx'$ with $\binom{a}{i+1}\binom{n-a}{t-i-1}$ deletion vectors for $i\in[0,t-1]$. Note that the upper bound $t-1$ for $i$ is due to the fact that we cannot obtain the word $\by_1$ with $\supp\{a-t\}$ from $\bx'$. Together, these mean that we may obtain the same multiset of output words from $$\binom{n-1}{t-1}+\sum_{i=0}^{t-1}\min\left\{\binom{a}{i+1}\binom{n-a-1}{t-i-1},\binom{a-1}{i}\binom{n-a}{t-i}\right\}$$ channels.

Let us consider $\binom{a}{i+1}\binom{n-a-1}{t-i-1}/\left(\binom{a-1}{i}\binom{n-a}{t-i}\right)$ for $i\in[0,t-1]$ to determine when one of these is smaller. We have
$$
	\frac{\binom{a}{i+1}\binom{n-a-1}{t-i-1}}{\binom{a-1}{i}\binom{n-a}{t-i}} \geq 1  \hspace{0.5cm}
	\Leftrightarrow \hspace{0.5cm}\frac{a(t-i)}{(i+1)(n-a)} \geq 1  \hspace{0.5cm}
	\Leftrightarrow \hspace{0.5cm}\frac{at+a}{n}-1 \geq i.
$$

Let us denote $A=\lfloor\frac{at+a}{n}\rfloor-1$. In the following, we use the well-known binomial identities, of which the first is Vandermonde's identity, $\sum_{i=0}^k\binom{r}{i}\binom{p}{k-i}=\binom{p+r}{k}$ and  $\binom{n}{k}=\binom{n-1}{k}+\binom{n-1}{k-1}$. Now we have \begin{align*}
&\binom{n-1}{t-1}+\sum_{i=0}^{t-1}\min\left\{\binom{a}{i+1}\binom{n-a-1}{t-i-1},\binom{a-1}{i}\binom{n-a}{t-i}\right\}\\
=&\binom{n-1}{t-1}+\sum_{i=0}^{A}\binom{a-1}{i}\binom{n-a}{t-i}+\sum_{i=A+1}^{t-1}\binom{a}{i+1}\binom{n-a-1}{t-i-1}\\
=&\binom{n-1}{t-1}+\sum_{i=0}^{A}\binom{a-1}{i}\binom{n-a-1}{t-i}+\sum_{i=0}^{A}\binom{a-1}{i}\binom{n-a-1}{t-i-1}\\
+&\sum_{i=A+1}^{t-1}\binom{a-1}{i+1}\binom{n-a-1}{t-i-1}+\sum_{i=A+1}^{t-1}\binom{a-1}{i}\binom{n-a-1}{t-i-1}\\
=&\binom{n-1}{t-1}+\sum_{i=0}^{t-1}\binom{a-1}{i}\binom{n-a-1}{t-i-1}+\sum_{i=0}^{A}\binom{a-1}{i}\binom{n-a-1}{t-i}\\
+&\sum_{i=A+2}^{t}\binom{a-1}{i}\binom{n-a-1}{t-i}\\
=&\binom{n-1}{t-1}+\binom{n-2}{t-1}+\sum_{i=0}^{A}\binom{a-1}{i}\binom{n-a-1}{t-i}\\
+&\sum_{i=A+1}^{t}\binom{a-1}{i}\binom{n-a-1}{t-i}-\binom{a-1}{A+1}\binom{n-a-1}{t-A-1}\\
=&\binom{n-1}{t-1}+\binom{n-2}{t-1}+\sum_{i=0}^{t}\binom{a-1}{i}\binom{n-a-1}{t-i}-\binom{a-1}{A+1}\binom{n-a-1}{t-A-1}\\
=&\binom{n-1}{t-1}+\binom{n-2}{t-1}+\binom{n-2}{t}-\binom{a-1}{A+1}\binom{n-a-1}{t-A-1}\\
=&\binom{n}{t}-\binom{a-1}{A+1}\binom{n-a-1}{t-A-1}.
\end{align*}

Hence, we require exactly $\binom{n}{t}-\binom{a-1}{A+1}\binom{n-a-1}{t-A-1}+1$ channels to distinguish between $\bx$ and $\bx'$ in the multiset model.
\end{proof}

Corollaries \ref{CorDelVecMulExBad}, \ref{CorDelVecMulExHalf} and \ref{CorDelVecMulCumHalf} follow from Proposition \ref{DelVecMultw1}. In these corollaries we establish 
how many channels are exactly required for
distinguishing between some interesting wordpairs $\bx_1,\bx_2$. These wordpairs are interesting since at least for some values of $n$ and $t$, they are extremal (see the discussion in Remark \ref{RemComparison}). However, it is possible that for some values of $n$ and $t$, these are not extremal wordpairs.

\begin{corollary}\label{CorDelVecMulExBad}
Let $n= h(2t+2)$, positive integers $h\geq1$ and $q,t\geq2$. If exactly $t$ deletions occur in each channel, then in the multiset model exactly  $$\binom{n}{t}-\binom{ht-1}{\lfloor t/2\rfloor}\binom{h(t+2)-1}{\lceil t/2\rceil}+1$$ channels is enough for distinguishing between $\bx_1=0^{ht-1}10^{h(t+2)}$ and $\bx_2=0^{ht}10^{h(t+2)-1}$.
%
%
\end{corollary}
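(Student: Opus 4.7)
The plan is to derive this corollary as a direct instantiation of Proposition \ref{DelVecMultw1} with parameter $a = ht$; essentially no new work is needed beyond substitution and a small arithmetic simplification, so I will mainly check that the hypotheses match and that the floor expression collapses to $\lfloor t/2\rfloor$.

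First I would observe that with $a := ht$ we have $n - a = h(2t+2) - ht = h(t+2)$, so the word $\bx = 0^{a-1}10^{n-a}$ of Proposition \ref{DelVecMultw1} becomes precisely $\bx_1 = 0^{ht-1}10^{h(t+2)}$, and $\bx' = 0^{a}10^{n-a-1}$ becomes $\bx_2 = 0^{ht}10^{h(t+2)-1}$. The hypothesis $n \geq t+1$ of the proposition is clearly satisfied since $n = h(2t+2) \geq 2t+2 > t+1$.

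Next I would simplify the floor term appearing in the proposition:
\[
\left\lfloor\frac{at+a}{n}\right\rfloor \;=\; \left\lfloor\frac{ht(t+1)}{h(2t+2)}\right\rfloor \;=\; \left\lfloor\frac{t(t+1)}{2(t+1)}\right\rfloor \;=\; \left\lfloor\frac{t}{2}\right\rfloor.
\]
Together with $n - a - 1 = h(t+2)-1$ and the identity $t - \lfloor t/2\rfloor = \lceil t/2\rceil$, Proposition \ref{DelVecMultw1} immediately yields the required channel count
\[
\binom{n}{t} - \binom{ht-1}{\lfloor t/2\rfloor}\binom{h(t+2)-1}{\lceil t/2\rceil} + 1.
\]

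There is no real obstacle here; the only thing worth pausing over is the floor computation, and checking that the assumption $t \geq 2$ (together with $h \geq 1$) is compatible with the proposition's hypothesis $t \geq 1$. Since the proof is a one-line substitution, I would keep the write-up very short and simply cite Proposition \ref{DelVecMultw1} after indicating the choice $a = ht$ and the two arithmetic identifications above.
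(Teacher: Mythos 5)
Your proposal is correct and matches the paper exactly: the paper derives Corollary \ref{CorDelVecMulExBad} as a direct instantiation of Proposition \ref{DelVecMultw1} with $a=ht$, and your substitution checks (including $\lfloor (at+a)/n\rfloor=\lfloor t/2\rfloor$ and $t-\lfloor t/2\rfloor=\lceil t/2\rceil$) are exactly the arithmetic needed.
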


\begin{corollary}\label{CorDelVecMulExHalf}
Let $n\geq 2t+2$ be even, $t\geq1$, $q\geq2$. If exactly $t$ deletions occur in each channel, then in the multiset model exactly  $$\binom{n}{t}-\binom{n/2-1}{\lfloor t/2\rfloor}\binom{n/2-1}{\lceil t/2\rceil}+1$$ channels is enough for distinguishing between $\bx_1=0^{n/2-1}10^{n/2}$ and $\bx_2=0^{n/2}10^{n/2-1}$.
%
\end{corollary}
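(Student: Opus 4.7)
The plan is to obtain Corollary \ref{CorDelVecMulExHalf} as a direct specialization of Proposition \ref{DelVecMultw1}. The wordpair $\bx_1 = 0^{n/2-1}10^{n/2}$ and $\bx_2 = 0^{n/2}10^{n/2-1}$ is exactly of the form $\bx = 0^{a-1}10^{n-a}$ and $\bx' = 0^a 10^{n-a-1}$ considered in that proposition, with the choice $a = n/2$ (which is a positive integer since $n$ is even and $n \geq 2t+2 \geq 4$). Thus the exact number of channels needed in the multiset model to distinguish between them is
\[
\binom{n}{t} - \binom{a-1}{\lfloor (at+a)/n\rfloor}\binom{n-a-1}{t-\lfloor (at+a)/n\rfloor} + 1.
\]

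The next step is to substitute $a = n/2$ and simplify each of the three ingredients. Clearly $a - 1 = n/2 - 1$ and $n - a - 1 = n/2 - 1$, which produces the two equal upper arguments in the claimed expression. For the lower arguments I compute
\[
\left\lfloor \frac{at + a}{n} \right\rfloor = \left\lfloor \frac{(n/2)(t+1)}{n} \right\rfloor = \left\lfloor \frac{t+1}{2} \right\rfloor = \left\lceil \frac{t}{2} \right\rceil,
\]
where the last equality uses the standard identity $\lfloor (t+1)/2 \rfloor = \lceil t/2\rceil$ for integer $t$. Consequently
\[
t - \left\lfloor \frac{at+a}{n} \right\rfloor = t - \left\lceil \frac{t}{2} \right\rceil = \left\lfloor \frac{t}{2} \right\rfloor.
\]

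Plugging these into the formula from Proposition \ref{DelVecMultw1} yields
\[
\binom{n}{t} - \binom{n/2-1}{\lceil t/2\rceil}\binom{n/2-1}{\lfloor t/2\rfloor} + 1,
\]
which, after swapping the two factors (a cosmetic change), is exactly the expression in the corollary. There is essentially no hard step here: the only thing that needs care is the floor/ceiling identity, which holds for both parities of $t$, so the formula is valid for every $t \geq 1$ satisfying $n \geq 2t+2$. No separate argument is needed for the hypothesis $a = n/2 \geq 1$ or for the sides $n-a-1 \geq 0$ since $n \geq 2t+2 \geq 4$ guarantees both.
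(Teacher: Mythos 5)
Your proposal is correct and matches the paper's approach: the paper derives this corollary simply by specializing Proposition \ref{DelVecMultw1} to $a = n/2$, exactly as you do, with the floor/ceiling identity $\lfloor (t+1)/2\rfloor = \lceil t/2\rceil$ handling both parities of $t$. The only difference is that the paper leaves this substitution implicit, whereas you spell it out.
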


Next we consider the case of \emph{at most} $t$ deletion errors in a channel. 

\begin{corollary}\label{CorDelVecMulCumHalf}
Let $n\geq 2t+2$ be even, $q,t\geq2$. If \emph{at most} $t$ deletions occur in each channel, then  in the multiset model for $a=n/2$ exactly  $$V(n,t)-\sum_{i=0}^t\binom{n/2-1}{\lfloor\frac{i}{2}\rfloor}\binom{n/2-1}{\lceil\frac{i}{2}\rceil}+1$$ channels is enough for distinguishing between $\bx_1=0^{n/2-1}10^{n/2}$ and $\bx_2=0^{n/2}10^{n/2-1}$.
%
\end{corollary}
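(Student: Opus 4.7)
The plan is to reduce this cumulative statement to the exact-weight Proposition~\ref{DelVecMultw1} by decomposing according to the weight of the applied deletion vector. Since a deletion vector of weight $i$ produces an output word of length $n-i$, output words arising from deletion vectors of different weights lie in disjoint ambient spaces and cannot coincide. Hence the output multiset stratifies as a disjoint union over $i \in \{0,1,\ldots,t\}$, and the multisets obtained from $\bx_1$ and $\bx_2$ agree if and only if they agree within each stratum.

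Next, I would specialize the counting argument in the proof of Proposition~\ref{DelVecMultw1} to each fixed weight $i$, substituting $t \mapsto i$ and $a = n/2$. The critical threshold $\lfloor(ai+a)/n\rfloor$ then simplifies to $\lfloor (i+1)/2 \rfloor = \lceil i/2 \rceil$, so the proposition specializes to the statement that the maximum number of distinct weight-$i$ deletion vectors that can be applied to $\bx_1$ and to $\bx_2$ while producing identical length-$(n-i)$ submultiset contributions is exactly $\binom{n/2-1}{\lceil i/2 \rceil}\binom{n/2-1}{\lfloor i/2 \rfloor}$, with any further weight-$i$ vector forcing a difference.

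Since the strata are independent, an adversary's largest collection of weight-at-most-$t$ deletion vectors for which $\bx_1$ and $\bx_2$ remain indistinguishable is obtained by summing the per-stratum maxima, yielding $\sum_{i=0}^t \binom{n/2-1}{\lfloor i/2 \rfloor}\binom{n/2-1}{\lceil i/2 \rceil}$. Because the total number of weight-at-most-$t$ deletion vectors is $V(n,t)$, any additional channel beyond this total is forced to employ a distinguishing vector, which simultaneously gives the sufficiency upper bound and the matching lower bound of the stated count.

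The only point that requires care is verifying that the extremal weight-$i$ adversarial sets coming from Proposition~\ref{DelVecMultw1} can be chosen simultaneously for all $i \in [0,t]$ without conflict; this is immediate from the disjoint-length observation, since deletion vectors of distinct weights operate in non-overlapping output spaces and therefore cannot interact. Consequently no cross-weight multiset collision can either help or obstruct the adversary, and the summation is tight.
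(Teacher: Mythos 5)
Your overall strategy---stratify the output multiset by the weight $i$ of the deletion vector (equivalently, by the output length $n-i$, so that different strata cannot interact), apply the exact-weight result with $a=n/2$ in each stratum, and sum---is exactly the route the paper takes (the paper goes through Corollary~\ref{CorDelVecMulExHalf}, which is nothing but Proposition~\ref{DelVecMultw1} specialized to $a=n/2$, plus the observation $N_0=1$ for the empty stratum). However, there is a concrete error in the middle of your argument. Proposition~\ref{DelVecMultw1} at weight $i$ (with $a=n/2$, so that $\lfloor(ai+a)/n\rfloor=\lceil i/2\rceil$ as you computed) says that the maximum number of distinct weight-$i$ deletion vectors under which $\bx_1$ and $\bx_2$ can produce identical output multisets is $\binom{n}{i}-\binom{n/2-1}{\lfloor i/2\rfloor}\binom{n/2-1}{\lceil i/2\rceil}$, \emph{not} $\binom{n/2-1}{\lfloor i/2\rfloor}\binom{n/2-1}{\lceil i/2\rceil}$ as you state. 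Taken literally, your identification would make the adversary's largest indistinguishable collection have size $\sum_{i=0}^t\binom{n/2-1}{\lfloor i/2\rfloor}\binom{n/2-1}{\lceil i/2\rceil}$, and the required number of channels would then be that quantity plus one, which contradicts the very formula you are proving. Your closing appeal to the total count $V(n,t)$ does not repair this, because the situation is not that there is a fixed set of ``distinguishing vectors'' the adversary must avoid: per common output word the adversary is limited to the minimum of the two multiplicities, and the per-stratum cap is the complementary quantity you dropped.

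Once the per-stratum maximum is stated correctly, the rest goes through exactly as you intend: strata of different output lengths cannot collide, the weight-$0$ stratum contributes $\binom{n}{0}-\binom{n/2-1}{0}\binom{n/2-1}{0}=0$ matched vectors (note Proposition~\ref{DelVecMultw1} is stated for $t\geq1$, so handle $i=0$ separately by observing that the empty deletion vector alone distinguishes $\bx_1\neq\bx_2$, as the paper does via $N_0=1$), the per-stratum extremal families can be chosen simultaneously by the disjoint-length observation, and summing $\binom{n}{i}-\binom{n/2-1}{\lfloor i/2\rfloor}\binom{n/2-1}{\lceil i/2\rceil}$ over $i\in[0,t]$ gives a largest indistinguishable collection of size $V(n,t)-\sum_{i=0}^t\binom{n/2-1}{\lfloor i/2\rfloor}\binom{n/2-1}{\lceil i/2\rceil}$, whence one additional channel is both sufficient and necessary. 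With that correction your proof coincides with the paper's.
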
\begin{proof}
Let us denote by $N_i$, for $0\leq i\leq t$, the number of channels we require to distinguish between $\bx_1$ and $\bx_2$ when exactly $i$ errors occur. Then, we require $\sum_{i=0}^t (N_i-1)+1$ channels to distinguish between $\bx_1$ and $\bx_2$ when at most $t$ errors occur in a channel since obtaining an output word $\by\in \F^{n-i}_q$ from both $\bx_1$ and $\bx_2$ requires that exactly $i$ deletions occur to both $\bx_1$ and $\bx_2$. Notice that $N_0=1=\binom{n}{0}-\binom{n/2-1}{\lfloor 0/2\rfloor}\binom{n/2-1}{\lceil 0/2\rceil}+1$ and we obtain the other values for $N_i$ from Corollary \ref{CorDelVecMulExHalf}. Hence, we have $$\sum_{i=0}^t (N_i-1)+1=\sum_{i=0}^t \left(\binom{n}{i}-\binom{n/2-1}{\lfloor\frac{i}{2}\rfloor}\binom{n/2-1}{\lceil\frac{i}{2}\rceil}\right)+1=V(n,t)-\sum_{i=0}^t\binom{n/2-1}{\lfloor\frac{i}{2}\rfloor}\binom{n/2-1}{\lceil\frac{i}{2}\rceil}+1.$$
\end{proof}

In the following remark we observe some differences in the behaviour of extremal wordpairs between the multiset and non-multiset models.
\begin{remark}\label{Remt=2}
Let $t=2$ and $n=6h$ for an integer $h\geq2$. Let at most $t$ deletions occur in any channel. Then, by Corollary \ref{CorDelVecMulCumHalf}, the exact minimum number of channels required in the multiset model for distinguishing between $\bx_1=0^{n/2-1}10^{n/2}$ and $\bx_2=0^{n/2}10^{n/2-1}$ is $N= V(n,2)-\sum_{i=0}^2\binom{n/2-1}{\lfloor\frac{i}{2}\rfloor}\binom{n/2-1}{\lceil\frac{i}{2}\rceil}+1=\binom{n}{2}+n+1-(n/2-1)-(n/2-1)^2=n^2/4+n+1$. On the other hand, by considering Corollary \ref{CorDelVecMulExBad} with $t=2$ and Proposition \ref{DelVecMultw1} with $t=1$, the exact minimum number of channels required in the multiset model for distinguishing between $\bx=0^{2h-1}10^{4h}$ and $\bx'=0^{2h}10^{4h-1}$ is \begin{align*}
&\binom{n}{2}-(2h-1)(h(2+2)-1)+(n-\binom{2h-1}{\lfloor\frac{4h}{n}\rfloor}\binom{n-2h-1}{1-\lfloor\frac{4h}{n}\rfloor})+1\\
=&\binom{n}{2}-(8h^2-6h+1)+(n-(n-2h-1))+1\\
=&\binom{n}{2}-2n^2/9+4n/3+1=5n^2/18+5n/6+1.
\end{align*}
Hence, we require $n^2/36-n/6$ more channels to distinguish between $\bx$ and $\bx'$ than we need for distinguishing between $\bx_1$ and $\bx_2$. Recall that by Theorems \ref{thm:delVecnAllnbound} and \ref{thm:delVecnAllNormalSet}, the wordpair $\bx_1,\bx_2$ requires the most channels for the non-multiset model.
Thus, the set of extremal wordpairs differs between the multiset and non-multiset models for deletion errors for some parameters of $n$ and $t$.
\end{remark}

In the following lemma, we consider how many channels we may require for distinguishing two words whose weights differ by $b$. Furthermore, we present a pair of words attaining the presented bound.



\begin{lemma}\label{DelVecMulDifwWords}
Let exactly $t$ deletions occur in any channel in the multiset model. If $w(\bx_1)=w(\bx_2)+b$ for two binary words on symbols 0 and 1 and $b\geq1$, then the number of channels $N$ for which we may obtain the same output word set from both $\bx_1$ and $\bx_2$ is at most
$$N=\sum_{i=b}^{t}\min\left\{\binom{w(\bx_1)}{i}\binom{n-w(\bx_1)}{t-i},\binom{w(\bx_1)-b}{i-b}\binom{n-w(\bx_1)+b}{t+b-i}\right\}$$
and the value is tight for words
$\bx_1=1^{w(\bx_1)}0^{n-w(\bx_1)}$ and $\bx_2=1^{w(\bx_1)-b}0^{n+b-w(\bx_1)}$. 
%
%
\end{lemma}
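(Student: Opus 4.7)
The plan is to split the proof into an upper bound (which holds for \emph{any} binary pair with $w(\bx_1)=w(\bx_2)+b$) and a matching construction for the specific wordpair.

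For the upper bound, I would first parameterize by output weight. If a word $\by$ appears in the common multiset, write $k=w(\by)$. Since a deletion vector removes $w(\bx_j)-k$ ones from $\bx_j$, a common output of weight $k$ forces us to delete exactly $i:=w(\bx_1)-k$ ones and $t-i$ zeros from $\bx_1$, and $i-b$ ones and $t+b-i$ zeros from $\bx_2$. This already restricts $i$ to the range $b\le i\le t$ (binomials outside this range vanish, so the summation $i=b,\dots,t$ is the natural one). For each such $i$, let $a^{(1)}_{\by}$ and $a^{(2)}_{\by}$ denote the number of deletion vectors producing $\by$ from $\bx_1$ and $\bx_2$, respectively. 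Then $\sum_{\by}a^{(j)}_{\by}$ equals $\binom{w(\bx_1)}{i}\binom{n-w(\bx_1)}{t-i}$ for $j=1$ and $\binom{w(\bx_1)-b}{i-b}\binom{n-w(\bx_1)+b}{t+b-i}$ for $j=2$.

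The key observation is that the number of channels contributed by weight $k$ to a common multiset is exactly $\sum_{\by}\min(a^{(1)}_{\by},a^{(2)}_{\by})$ (since that is the largest submultiset common to both). Using the trivial pointwise bound $\min(u,v)\le u$ and $\min(u,v)\le v$, this contribution is at most $\min\bigl(\sum_{\by}a^{(1)}_{\by},\sum_{\by}a^{(2)}_{\by}\bigr)$, which is exactly the $i$-th summand in the stated formula. Summing over $i=b,\dots,t$ yields the upper bound.

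For tightness with $\bx_1=1^{w(\bx_1)}0^{n-w(\bx_1)}$ and $\bx_2=1^{w(\bx_1)-b}0^{n-w(\bx_1)+b}$, I would argue that because all the ones (resp.\ zeros) in each word are in a single block, deleting any $i$ ones and $t-i$ zeros from $\bx_1$ produces the unique word $1^{w(\bx_1)-i}0^{n-w(\bx_1)-t+i}$, and deleting $i-b$ ones and $t+b-i$ zeros from $\bx_2$ produces the same word. Hence, at each weight level there is a single output $\by$, so the pointwise inequality is tight: $a^{(1)}_{\by}$ and $a^{(2)}_{\by}$ equal the full binomial counts, and selecting $\min$-many distinct deletion vectors from each side yields matching multisets. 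This achieves the bound term-by-term in $i$.

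The only subtlety worth checking carefully is the switch between ``multisets coincide'' and ``channels are counted by distinct deletion vectors'': since we can pick any subfamily of $\min$-many deletion vectors on each side (they are always distinct by construction) and they all map to the same single word, the multiset of outputs on each side is identical. No calculation beyond this verification is needed, so I do not expect a real obstacle — the combinatorial inequality $\sum\min\le\min\sum$ supplies everything, and the block-structured extremal example saturates it because its deletion balls degenerate to a single word per weight level.
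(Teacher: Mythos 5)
Your proposal is correct and follows essentially the same route as the paper: decompose the common outputs by the number $i$ of deleted ones from $\bx_1$ (equivalently by output weight), bound the common contribution in each class by the minimum of the two binomial counts, and saturate the bound with the block words $1^{w(\bx_1)}0^{n-w(\bx_1)}$, $1^{w(\bx_1)-b}0^{n+b-w(\bx_1)}$, for which each weight class collapses to a single output word. Your explicit per-word accounting with $\sum_{\by}\min(a^{(1)}_{\by},a^{(2)}_{\by})\le\min\bigl(\sum_{\by}a^{(1)}_{\by},\sum_{\by}a^{(2)}_{\by}\bigr)$ just makes precise a step the paper states more tersely.
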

\begin{proof}
Let $w(\bx_1)=w(\bx_2)+b$ for some $b\geq1$ for two binary words on symbols 0 and 1. Let us consider the multiset of output words we can obtain from both of these words. We can observe that we need to delete at least $b$ symbols $1$ from $\bx_1$ and at least $b$ symbols $0$ from $\bx_2$. Clearly, $b\le t$. Moreover, if we delete exactly $i$ symbols $1$ from $\bx_1$ and exactly $t-i$ symbols $0$ from $\bx_1$ to obtain some output word, then to obtain the resulting output word we need to delete exactly $i-b$ symbols $1$ and exactly $t+b-i$ symbols $0$ from $\bx_2$. Hence, we may obtain these output words from at most $\binom{w(\bx_1)}{i}\binom{n-w(\bx_1)}{t-i}$ channels from $\bx_1$ and from at most $\binom{w(\bx_1)-b}{i-b}\binom{n-w(\bx_1)+b}{t+b-i}$ channels from $\bx_2$. In other words, we can obtain them from at most $$N=\sum_{i=b}^{t}\min\left\{\binom{w(\bx_1)}{i}\binom{n-w(\bx_1)}{t-i},\binom{w(\bx_1)-b}{i-b}\binom{n-w(\bx_1)+b}{t+b-i}\right\}$$ channels, as claimed. Finally, we may observe that if $\bx_1=1^{w(\bx_1)}0^{n-w(\bx_1)}$ and $\bx_2=1^{w(\bx_1)-b}0^{n+b-w(\bx_1)}$, then we  have the same output word multiset for $N$ channels so the upper bound is tight.
\end{proof}

In the following proposition, we examine more closely the case from the previous lemma with $b=1$.

\begin{proposition}\label{DelVecMulDifw2}
Let $n$ be even  and let exactly $t$ deletions occur in any channel in the multiset model. If $w(\bx_1)=w(\bx_2)+1$ for two binary words on symbols 0 and 1, then there exists a pair $\bx$ and $\bx'$ of binary words on symbols 0 and 1 with $w(\bx)=w(\bx')=1$ such that we require at least as many channels for distinguishing between $\bx$ and $\bx'$ as we need for distinguishing between $\bx_1$ and $\bx_2$.
%
\end{proposition}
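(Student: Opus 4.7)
By Lemma \ref{DelVecMulDifwWords} I may assume $\bx_1=1^w0^{n-w}$ and $\bx_2=1^{w-1}0^{n-w+1}$ for some $w\in[1,n]$, since this is the extremal pair for any given weight difference of $1$. The case $w=1$ is immediate, as any weight-$1$ pair already shares the all-zero output $0^{n-t}$ with multiplicity $\binom{n-1}{t-1}$ from each word, matching (or exceeding) the block pair's total of $\binom{n-1}{t-1}$. So assume $w\ge 2$, and propose the weight-$1$ pair $\bx=0^{w-1}10^{n-w}$, $\bx'=0^w10^{n-w-1}$, placing the single $1$ at the block boundary between $\bx_1$ and $\bx_2$.

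Write $A_i=\binom{w}{i}\binom{n-w}{t-i}$, $B_i=\binom{w-1}{i-1}\binom{n-w+1}{t-i+1}$ (from Lemma \ref{DelVecMulDifwWords}), and derive analogously $P_c=\binom{w-1}{c}\binom{n-w}{t-c}$, $Q_c=\binom{w}{c+1}\binom{n-w-1}{t-c-1}$ for the weight-$1$ pair (by partitioning deletion vectors according to whether they contain the position of the $1$). Two applications of Pascal's identity produce the decompositions $A_i=V_i+P_i$, $B_i=V_i+P_{i-1}$ with $V_i=\binom{w-1}{i-1}\binom{n-w}{t-i}$, and $P_c=a_c+b_c$, $Q_c=a_c+b_{c+1}$ with $a_c=\binom{w-1}{c}\binom{n-w-1}{t-c-1}$, $b_c=\binom{w-1}{c}\binom{n-w-1}{t-c}$. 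Combining with Vandermonde's identity $\sum_iV_i=\binom{n-1}{t-1}$, the two required channel counts become $N_{\mathrm{bl}}=\binom{n-1}{t-1}+\sum_{i=1}^t\min(P_{i-1},P_i)$ and $N_{\mathrm{w}}=\binom{n-1}{t-1}+\sum_{c=0}^{t-1}a_c+\sum_{c=0}^{t-1}\min(b_c,b_{c+1})$, so the claim $N_{\mathrm{w}}\ge N_{\mathrm{bl}}$ reduces to
$$\sum_{c=0}^{t-1}a_c+\sum_{c=0}^{t-1}\min(b_c,b_{c+1})\;\ge\;\sum_{i=1}^{t}\min(a_{i-1}+b_{i-1},\,a_i+b_i).$$

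For each $i$, a short case split on the signs of $a_{i-1}-a_i$ and $b_{i-1}-b_i$ yields $\min(a_{i-1}+b_{i-1},a_i+b_i)-\min(b_{i-1},b_i)\le a_{i-1}$ when $b_{i-1}\le b_i$ and $\le a_i$ otherwise; summing over $i$ (reindexing by $c=i-1$), the inequality reduces further to $\sum_{c:\,b_c>b_{c+1}}(a_c-a_{c+1})\ge 0$. The main obstacle is to see that every such summand is non-negative, which I plan to establish through the algebraic identity
$$\frac{b_{c+1}/b_c}{a_{c+1}/a_c}\;=\;1+\frac{n-w}{(t-c-1)(n-w-t+c)}\;>\;1,$$
valid whenever the ratios are defined. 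This monotonicity-coupling identity forces $b_{c+1}\le b_c$ to imply $a_{c+1}<a_c$, so the remaining sum is term-wise non-negative, and the boundary/degenerate cases (where some $a_c$ or $b_c$ vanishes) are checked by inspection. The ratio identity—coupling the modes of the two unimodal sequences $(a_c)$ and $(b_c)$—is the delicate technical ingredient that makes the whole reduction work.
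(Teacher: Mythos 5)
Your proposal is correct and reaches the paper's conclusion by a genuinely different execution of the same overall strategy. Like the paper, you reduce via Lemma \ref{DelVecMulDifwWords} to the block pair and then dominate it output-by-output with an explicit weight-one pair; but you choose a different pair ($\bx'=0^{w}10^{n-w-1}$ rather than the paper's $0^{w-2}10^{n+1-w}$), you do not need the paper's normalization $w(\bx_1)\geq n/2+1$, and your comparison mechanism is different. The paper compares the four counts $m_1,m_2,m,m'$ through three ratio inequalities producing thresholds $P\geq R\geq Q$ and then runs a four-regime case analysis on $i$; you instead split each count with Pascal's rule ($A_i=V_i+P_i$, $B_i=V_i+P_{i-1}$, $P_c=a_c+b_c$, $Q_c=a_c+b_{c+1}$), absorb the shared $\binom{n-1}{t-1}$ via Vandermonde, and reduce everything to the single coupling statement that $b_{c+1}\leq b_c$ forces $a_{c+1}\leq a_c$. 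I checked your ratio identity: indeed $\frac{b_{c+1}/b_c}{a_{c+1}/a_c}=\frac{(t-c)(n-w-t+c+1)}{(t-c-1)(n-w-t+c)}=1+\frac{n-w}{(t-c-1)(n-w-t+c)}$, and in the only nontrivial situation ($b_c>b_{c+1}$ and $a_{c+1}>0$) all four quantities and both factors in the denominator are automatically positive, so the "checked by inspection" degenerate cases do work out; when $a_{c+1}=0$ the needed inequality is trivial. Your decompositions and the identity $\sum_i V_i=\binom{n-1}{t-1}$ are also correct, so the term-wise domination (in fact $a_c+\min(b_c,b_{c+1})\geq\min(P_c,P_{c+1})$ for every $c$) goes through. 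What your route buys is a shorter, essentially case-free argument; what the paper's buys is explicit information about where the minimum switches (the thresholds $P,R,Q$), which it reuses in its discussion of extremal pairs.

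One small repair is needed: your chosen word $\bx'=0^{w}10^{n-w-1}$ does not exist when $w=n$, and you only dispose of the boundary case $w=1$. The case $w=n$ is symmetric and equally immediate: for $\bx_1=1^n$, $\bx_2=1^{n-1}0$ the only common output is $1^{n-t}$, obtainable from $\bx_2$ by $\binom{n-1}{t-1}$ deletion vectors, so the common multiset size is $\binom{n-1}{t-1}$, which any weight-one pair already matches through the all-zero output (alternatively, complement the symbols and invoke your $w=1$ case). With that sentence added, the proof is complete.
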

\begin{proof}
Let $w(\bx_1)=w(\bx_2)+1$. By Lemma \ref{DelVecMulDifwWords} with $b=1$, we may obtain the same output word multiset from both $\bx_1$ and $\bx_2$ when 
 \begin{equation}\label{eqw1eroMultDel}
N=\sum_{i=1}^{t}\min\left\{\binom{w(\bx_1)}{i}\binom{n-w(\bx_1)}{t-i},\binom{w(\bx_1)-1}{i-1}\binom{n-w(\bx_1)+1}{t+1-i}\right\}.
\end{equation}
Moreover, this is attained by the pair   $\bx_1=1^{w(\bx_1)}0^{n-w(\bx_1)}$ and $\bx_2=1^{w(\bx_1)-1}0^{n+1-w(\bx_1)}$. Since we are interested in the case, where we require the largest number of channels for distinguishing between two input words, we assume from now on that $\bx_1$ and $\bx_2$ are as in the previous sentence. Furthermore, observe that we may assume without loss of generality that $w(\bx_1)\geq n/2+1$ and denote $w(\bx_1)=w$. Indeed, one of the two words has either more than $n/2$ zeroes or ones and if necessary, we could swap the roles of zeroes and ones. Consider the words $\bx=0^{w-1}10^{n-w}$ and $\bx'=0^{w-2}10^{n+1-w}$. We show that the multiset of output words which can be obtained from $\bx$ and $\bx'$ is at least as large as the multiset of output words which can be obtained from $\bx_1$ and $\bx_2$. Let $D=\{\bd_1,\dots,\bd_N\}$ and $D'=\{\bd_1',\dots,\bd_N'\}$ be the sets of deletion vectors of weight $t$ such that, for each $i\in[1,N]$, if we obtain an output word $\by$ from $\bx_1$ with $\bd_i\in D$, then we also obtain it from $\bx_2$ with $\bd_i'\in D'$. 
Furthermore, we make an observation that if $w(\bx_1)\in \supp(\bd)$ for $\bd\in D$ and  applying $\bd$ to $\bx_1$ gives an output word $\by$, then applying $\bd$ to $\bx_2$ also gives the same output word $\by$. Hence, we may assume that $D$ and $D'$ contain every deletion vector of weight $t$ which have $w(\bx_1)$ in their supports. There are $\binom{n-1}{t-1}$ such deletion vectors. Similarly, for $\bx$ and $\bx'$, we know that deletion vectors which contain $w$ or $w-1$, respectively, in their supports lead to the same output words containing only zeroes and there are $\binom{n-1}{t-1}$ such deletion vectors. Thus, we omit these deletion vectors from all the following considerations and calculations.
Moreover, let us denote by $D_\by\subseteq D$ the set of all deletion vectors of $D$ which result to $\by$ after applying them to $\bx_1$. Similarly, we denote by $D_\by'\subseteq D'$ the set of all deletion vectors of $D'$ which result to $\by$ after applying them to $\bx_2$. Recall that if applying $\bd_i\in D$ to $\bx_1$ leads to $\by$, then applying $\bd_i'\in D'$ to $\bx_2$ leads to $\by$. In particular, we have $|D_\by|=|D_\by'|$ for each $\by$ (since we consider the multiset model). Notice that sets $D_\by$ partition $D$ and sets $D_\by'$ partition $D'$.

We show that for each $D_\by$ (where $\by$ does not belong to the above omitted output words), we can injectively link another output word $\by'$ which can be attained with at least $|D_\by|$ deletion vectors from both $\bx$ and $\bx'$. Let $\by=1^{w-i}0^{n+i-w-t}$ and $\by'=0^{w-1-i}10^{n+i-w-t}$ for $i\in [1,w-1]$ (note that since $w(\by')>0$, it was not omitted above). We may assume that $i\leq w-1$, due to the previous omissions. Clearly, we also have $i\le t$. Let us use following notation:  \begin{align*}
m_1&=\binom{w-1}{i}\binom{n-w}{t-i},\\
m_2&=\binom{w-1}{i-1}\binom{n-w}{t+1-i},\\
m&=\binom{w-1}{i}\binom{n-w}{t-i},\\
m'&=\binom{w-2}{i-1}\binom{n+1-w}{t+1-i}.
\end{align*}We can obtain $\by$ from $\bx_1$ with $m_1$ deletion vectors and from $\bx_2$ with $m_2$ deletion vectors. Moreover, we may obtain $\by'$ from $\bx$ with $m$ deletion vectors and from $\bx'$ with $m'$ deletion vectors.

Notice that $m_1=m$. Next, we consider the values of $i\in[1,\min\{w-1,t\}]$ for which $m'\geq m_2$. Recall that $\binom{a}{b}=0$ if $b<0$ or $b>a$. 
Note that both $m'$ and $m_2$ obtain value $0$ with the same values of $i$, with the possible exception that  $m_2=0$ and $m'\ge1$ when $i=w+t-n$. Further note that, since $i\in[1,w-1]$, the left binomial coefficient in each of four parameters is always positive.
 When both $m'$ and $m_2$ obtain positive values, we have
$$\frac{m'}{m_2}=\frac{(n+1-w)(w-i)}{(w-1)(n+i-w-t)}\geq1 \Leftrightarrow\
%
%
n-t+wt \geq ni \Leftrightarrow
1+\frac{t(w-1)}{n} \geq i.$$

Denote above $P=1+\frac{t(w-1)}{n}$.
Let us next consider when we use $m$ and when $m'$. Recall that for each $i$, we are interested in the one that is smaller. Notice that both $m$ and $m'$ obtain value zero for same values of $i$. Now  for nonzero values
$$\frac{m}{m'}=\frac{(w-1)(t+1-i)}{(n+1-w)i} \geq1
%
%
\Leftrightarrow wt+w-t-1 \geq ni 
\Leftrightarrow \frac{(w-1)(t+1)}{n} \geq i.$$

Denote $Q=\frac{(w-1)(t+1)}{n}$.
Furthermore, let us compare values $m$ (or $m_1$) and $m_2$. Notice that $m$ and $m_2$ obtain value zero for the same values of $i$ with the possible exception for 
$i=t+w-n$ for which we may have $m\geq1$ and $m_2=0$.
Now  for nonzero values
$$\frac{m}{m_2}=\frac{(w-i)(t+1-i)}{i(n+i-w-t)} \geq1
%
%
\Leftrightarrow wt+w \geq ni+i
\Leftrightarrow \frac{w(t+1)}{n+1} \geq i. $$

Denote $ R=\frac{w(t+1)}{n+1}$.
Next, we show that $P\geq R\geq Q$ (since $ w,t\leq n$). We have \begin{align*}
P-R =& 1+\frac{(n+1)t(w-1)-nw(t+1)}{n^2+n}\\
%
%
=&1+\frac{tw-nt-t-nw}{n^2+n}\\
\geq& 1+\frac{tn-nt-t-n^2}{n^2+n}\geq0
\end{align*}
and 
\begin{align*}
R-Q =& \frac{nw(t+1)-(n+1)(w-1)(t+1)}{n^2+n}\\
%
%
%
=& \frac{(n+1-w)(t+1)}{n^2+n}>0.
\end{align*}

We note that for $i\in[1,\min\{t,w-1\}]$ when $m=0$, we also have $m_1,m_2=0$ and also when $m'=0$, we have $m_1,m_2=0$, 
as we can see from above together with the equality $m=m_1$.
Consider next the cases with $i\geq P$ and $P>i\geq R$. In both of these cases $m_2\geq m=m_1$. Now, we  obtain $\by$ and $\by'$ with $m=m_1$ ways since $m'\geq m$ and $m_2\geq m_1$. If $R>i\geq Q$, then $m'\geq m=m_1\geq m_2$ and we obtain $\by$ with $m_2$ deletion vectors and $\by'$ with $m\geq m_2$ deletion vectors. 
Finally, if $i<Q$, then $m=m_1\geq m'\geq m_2$ and we obtain $\by$ with $m_2$ deletion vectors and $\by'$ with $m'\geq m_2$ deletion vectors. 
Thus, in all three cases we can obtain $\by'$ in at least as many ways as we can obtain $\by$. Therefore, for any transmitted word pair $\bx_1,\bx_2$ with difference of exactly one in their weights, there exists another word pair $\bx$ and $\bx'$ with equal weights of one such that we require at least as many channels for distinguishing between $\bx$ and $\bx'$ as we require for distinguishing between $\bx_1$ and $\bx_2$.
%
\end{proof}

\begin{remark}\label{RemComparison} In this remark we discuss wordpairs leading to the largest channel numbers in the different models when exactly $t$ deletion errors occur.

\begin{enumerate}
\item In the traditional model, for even $n$ and $q=2$, the extremal wordpair (up to permutation of symbols) is (see \cite[proof of Lemma 1]{levenshtein2001efficient}) the pair $\bx=01010101\cdots01$, $\bx'=10010101\cdots01$ and for $n=2+hq$, $q\geq2$, $h\in \N$, the extremal wordpair is $\bx=0123\cdots(q-1)012\cdots(q-1)$, $\bx'=1023\cdots(q-1)012\cdots(q-1)$, that is the only difference is in the first two symbols and the words continue afterwards as alternating words.
\item In the deletion pattern model with non-multisets and even $n$, an extremal wordpair is $\bx=0^{n/2}10^{n/2-1}$, $\bx'=0^{n/2-1}10^{n/2}$ by Theorems \ref{thm:delVecnAllnbound}(ii) and \ref{thm:delVecnAllNormalSet}(ii).
\item In the deletion pattern model with multisets, even $n$ and odd $t$, 
 the wordpair $\bx = 0^{n/2}10^{n/2-1}$, $\bx' = 0^{n/2-1}10^{n/2}$, which is given in Corollary \ref{CorDelVecMulExHalf} (up to a permutation of symbols), seems to require the largest number of channels. 
Indeed by the proof of Proposition \ref{DelVecMult1}, it is an extremal wordpair for $t=1$. Furthermore, it is easy to check by computer, using a brute-force method finding every extremal wordpair, that this pair actually belongs to the set of \emph{extremal} wordpairs when $t=3$ and $n\in \{8,10\}$.
\item In the deletion pattern model with multisets, $n=2h(t+1)$ and even $t$, the wordpair which seems to be requiring the largest number of channels, which is presented in Corollary \ref{CorDelVecMulExBad} (up to permutation of symbols), is $\bx=0^{ht}10^{h(t+2)-1}$, $\bx'=0^{ht-1}10^{h(t+2)}$.  It is easy to check by computer with a brute-force method that this pair  belongs to the set of extremal wordpairs when $t=2$ and $n\in \{6,12\}.$ 
\end{enumerate}
\end{remark}


In the subsequent lemma, we give a tool for comparing the number of channels required in the worst case of the non-multiset deletion vector version compared to the multiset version.

\begin{lemma}\label{LembinomCompar}
Let $n\geq2t+2$ and $t$ be even positive integers. We have $$\frac{\binom{n/2-1}{t/2}^2}{\binom{n/2-1}{t}}\overset{n\to\infty}{\longrightarrow}\binom{t}{t/2}.$$
\end{lemma}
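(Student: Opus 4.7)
The plan is direct algebraic manipulation: expand the binomial coefficients in factorial form and isolate the factor $\binom{t}{t/2}$, then show the remaining ratio tends to $1$.

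First, I would write $m = n/2 - 1$ so that the expression becomes $\binom{m}{t/2}^2 / \binom{m}{t}$ and the limit is taken as $m \to \infty$. Expanding in factorials and cancelling gives
$$\frac{\binom{m}{t/2}^2}{\binom{m}{t}} = \frac{m!\,t!\,(m-t)!}{((t/2)!)^2 ((m-t/2)!)^2} = \binom{t}{t/2} \cdot \frac{m!\,(m-t)!}{((m-t/2)!)^2},$$
using $\binom{t}{t/2} = t!/((t/2)!)^2$. Thus it suffices to prove that the rightmost ratio tends to $1$.

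The second step is to rewrite this ratio as a product of $t/2$ simple fractions. Cancelling $(m-t/2)!$ against $m!$ in the numerator, and the other $(m-t/2)!$ against $(m-t)!$ in the denominator, gives
$$\frac{m!\,(m-t)!}{((m-t/2)!)^2} = \prod_{i=0}^{t/2-1} \frac{m-i}{m-t/2-i}.$$
Each factor is of the form $(m-i)/(m-t/2-i)$, and for every fixed $i \in \{0, 1, \ldots, t/2-1\}$ this tends to $1$ as $m\to\infty$ since $t$ is fixed. Being a finite product (of length $t/2$) of terms each tending to $1$, the whole product tends to $1$.

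Combining the two steps yields the claimed limit. There is no real obstacle here; the only point that needs care is ensuring that $t$ is treated as fixed while $m$ (equivalently $n$) grows, so that the number of factors in the product stays bounded and termwise convergence lifts to convergence of the product.
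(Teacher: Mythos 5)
Your proof is correct and follows essentially the same route as the paper: expand the binomial coefficients in factorial form, pull out the factor $\binom{t}{t/2}=t!/((t/2)!)^2$, and observe that the remaining ratio is a finite product of $t/2$ fractions, each tending to $1$ as $n\to\infty$ with $t$ fixed. The substitution $m=n/2-1$ and the explicit product notation are only cosmetic differences from the paper's argument.
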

\begin{proof}
We have \begin{align*}
	&\frac{\binom{n/2-1}{t/2}^2}{\binom{n/2-1}{t}}
	=\frac{(n/2-1)!t!(n/2-1-t)!}{(n/2-1-t/2)!(n/2-1-t/2)!(t/2)!(t/2)!}\\
	=&\binom{t}{t/2}\frac{(n/2-1)\cdots(n/2-t/2)}{(n/2-1-t/2)\cdots(n/2-t)}\\
	\overset{n\to\infty}{\longrightarrow}&\binom{t}{t/2}.
\end{align*}

\end{proof}

For an even  $t$ when exactly $t$ deletions occur, by Theorems \ref{thm:delVecnAllnbound}(ii) and \ref{thm:delVecnAllNormalSet} we require $\binom{n}{t}-\binom{n/2-1}{t}+1$ channels in the non-multiset model to separate extremal words presented in the case 2) of Remark \ref{RemComparison}. The same wordpair is also mentioned in Remark \ref{RemComparison} for the multiset model with even $t$ and by Corollary \ref{CorDelVecMulExHalf}, we require $\binom{n}{t}-\binom{n/2-1}{t/2}^2+1$ channels to distinguish between these words. By Lemma \ref{LembinomCompar}  when $n$ is large,  we require roughly $$\left(\binom{n}{t}-\binom{n/2-1}{t}+1\right)-\left(\binom{n}{t}-\binom{n/2-1}{t/2}^2+1\right)\approx \left(\binom{t}{t/2}-1\right)\binom{n/2-1}{t}$$
 more channels in the non-multiset model compared to the multiset case.

\section{Decoding}\label{sec:decoding}

In this section, we consider channels with insertion, deletion and substitution errors
using an underlying code containing almost all words of $\F^n_q$. 
We assume that each insertion vector is applied to the word of length $n$. Then deletion vectors and substitutions are applied to original non-inserted  symbols and no deletion affects the substituted symbols. We assume that each error pattern has the same probability. Unlike in the previous section, in this 
section we allow multiple channels to have the same error patterns.  In particular,  if only substitution errors occur, then each possible output word has the same probability to be outputted as we have seen in the beginning of Section \ref{SecErrorPat}. However, in the case of deletion and insertion errors, some output words are more likely. For the rest of the section, we focus on $q\geq4$. Notice that the presented technique cannot be expanded to the cases with $q < 4$ as will be seen in Remark~\ref{Rem:q<4}. Moreover, the case with $q=4$ is a natural size of alphabet for DNA-storage. The case with $q=4$ is presented in the conference version of this article \cite{junnila2023levenshtein} without a proof.

For channels with insertion, deletion and substitution errors, we introduce, for a code with  minor restrictions, a decoding  algorithm with complexity $O(Nn)$, where $N$ is the number of output words read at the point in which the algorithm halts (see Algorithm~\ref{algsubsinsdelLargeq}). Our algorithm never gives an incorrect result. However, for some output sets $Y$ it only outputs an empty word. 
When we discuss about complexities, we assume $q$ to be constant. 
The code we are using has only minor restrictions on how common the two most common symbols in any codeword can be. Moreover, similar restrictions have been used for example in \cite{viswanathan2008improved}. 
Besides giving verifiability properties and solving all three types of errors simultaneously, the novelty of our technique is that we do not use majority decoding which has been an essential part of most earlier techniques. 

Algorithm~\ref{algsubsinsdelLargeq} is an online algorithm in the sense that the output words of the channels can be viewed to be fed to the algorithm one by one (instead of giving all the outputs at once). In this context, the number $N$ of channels is assumed to denote the number of outputs required before the algorithm stops. Moreover, the algorithm is sort of a randomized one in style of a Las Vegas algorithm, although technically the randomization occurs outside of the algorithm in obtaining the output words of the channels. However, in Las Vegas style, if the number of the output words is unrestricted, then the algorithm is not guaranteed to halt (although it is highly likely), but if the algorithm halts, then it always gives a correct result.

Probabilistic decoders have been previously mostly considered for a setup, where each error to a single coordinate has an independent chance to occur, under the name \textit{trace reconstruction}; see, for example, \cite{batu2004reconstructing} in the case of deletion channels and \cite{viswanathan2008improved} in the case of simultaneous insertion, deletion and substitution errors. Unlike in these setups, we limit the maximum number of errors   which may occur in a channel, as has been done, for example, by Levenshtein in \cite{levenshtein2001efficient}. That allows our algorithm to have verifiability, that is, although the algorithm is probabilistic, it is likely that the algorithm halts (see Lemma \ref{ProbabilityALGDelInsSubLargeq}), and the output is always correct if the algorithm halts (see Lemma \ref{CorrectnessALGDelInsSubLargeq}). 

Let code $C \subseteq \Z_q^n$ contain all the words of $\Z_q^n$ except for those in which the two most common symbols appear together in total in at least $\lceil(p-1)n/p\rceil$ positions with $p=2^4/\mathbbm{e}$. Observe that there are $$\binom{q}{2}\left(\sum^n_{i=\left\lceil\frac{(p-1)n}{p}\right\rceil}\binom{n}{i}2^i(q-2)^{n-i}\right)$$ such words. Due to these restrictions on $C$, the third most common symbol (and also second most common symbol) in any codeword occurs in at least $\lceil n/((q-2)p)\rceil$ coordinates by the pigeonhole principle. Notice that $p$ is irrational and hence, $(p-1)n/p$ is not an integer. Our results in this section require that we are using code $C$ (or some sub-code of $C$). We next show that $C$ is \textit{large} when $q$ is fixed and $n$ is large.  
In order to estimate the cardinality of $C$, we first consider the case with $q = 4$. We have
\begin{align}
|C|\geq& 4^n-\binom{4}{2}\left(\sum^n_{i=\left\lceil\frac{(p-1)n}{p}\right\rceil}\binom{n}{i}2^i(4-2)^{n-i}\right)\nonumber\\
=&4^n-6\cdot4^{n/2}\sum^n_{i=\left\lceil\frac{(p-1)n}{p}\right\rceil}\binom{n}{i}\nonumber\\
\label{Csize}\tag{*}\geq& 4^{n}-6\cdot4^{n/2}(\mathbbm{e}p)^{n/p}\\
=& 4^{n}-6\cdot4^{n/2}4^{2\mathbbm{e}n/2^4}\nonumber\\
>&4^{n}-6\cdot 4^{7n/8}\in\Theta(q^n)\label{Csizeq=4}.
\end{align}

In Inequality (\ref{Csize}), we use the following modification of a well-known upper bound for partial binomial sums: $$\sum_{i=0}^{\lfloor h\rfloor}\binom{K}{i}\leq \left(\frac{\mathbbm{e}n}{h}\right)^h,$$ where $K\in \Z$ and $K\geq h>0$. Indeed, this upper bound holds since

\begin{align*}\sum_{i=0}^{\lfloor h\rfloor}\binom{K}{i}&\leq \sum_{i=0}^{\lfloor h\rfloor}\frac{h^i}{i!}\cdot\left(\frac{K}{h}\right)^i\leq \left(\frac{K}{h}\right)^{\lfloor h\rfloor}\sum_{i=0}^{\lfloor h\rfloor}\frac{h^i}{i!}\\
&< \left(\frac{K}{h}\right)^{\lfloor h\rfloor}\mathbbm{e}^h\leq \left(\frac{\mathbbm{e}K}{h}\right)^{ h}.
\end{align*}
In particular, for $\sum^n_{i=\left\lceil\frac{(p-1)n}{p}\right\rceil}\binom{n}{i}$ it gives:  $$\sum^n_{i=\left\lceil\frac{(p-1)n}{p}\right\rceil}\binom{n}{i}=\sum^{\lfloor n/p\rfloor}_{i=0}\binom{n}{i}\leq \left(\frac{\mathbbm{e}n}{n/p}\right)^{n/p}=(\mathbbm{e}p)^{n/p}.$$ 

 We can use similar arguments for the case with $q\geq 5$. Let $q=2^b\geq5$, $b=b'+\log_2 5$ where $b'\geq0$. We have \begin{align}
|C|=& q^n-\binom{q}{2}\left(\sum^n_{i=\left\lceil\frac{(p-1)n}{p}\right\rceil}\binom{n}{i}2^i(q-2)^{n-i}\right)\nonumber\\
>&q^{n}-\frac{q^2}{2}\cdot2^{n}\cdot q^{n/p}\sum^n_{i=\left\lceil\frac{(p-1)n}{p}\right\rceil}\binom{n}{i}\nonumber\\
>& q^{n}-q^2\cdot2^{n}\cdot 2^{bn/p}(\mathbbm{e}p)^{n/p}\nonumber\\
\geq& q^{n}-2^{2b}\cdot2^{n}\cdot 2^{\mathbbm{e}bn/16}\cdot2^{4\mathbbm{e}n/16}\nonumber\\
=&q^{n}-2^{2b}\cdot2^{n+\mathbbm{e}\log_2 (5)n/16+4\mathbbm{e}n/16+\mathbbm{e}b'n/16}\nonumber\\
%
>&q^{n}-2^{2b}\cdot2^{2.08n+b'n/5}
\in\Theta(q^n).\label{CsizeLargeqLastLine}
\end{align}
The inclusion (\ref{CsizeLargeqLastLine}) follows from $q^n=2^{bn}=2^{\log_25n+b'n}>2^{2.32n+b'n}$ and the facts that $2.32>2.08$ and $b'>b'/5$.
By (\ref{Csizeq=4}) and (\ref{CsizeLargeqLastLine}), we have  $|C|\in \Theta(q^n)$ for all integers $q\geq 4$.

 We denote by $t_s$, $t_i$ and $t_d$ the number of substitution, insertion and deletion errors, respectively, which may occur in a channel. When we discuss about the complexity of our algorithm, these values are assumed to be constants. Moreover, Lemma \ref{ProbabilityALGDelInsSubLargeq} gives them some minor constraints. 
Recall that for our Las Vegas algorithm, the underlying code $C \subset \Z_q^n$ is required to be such that 
in each codeword the two most common symbols appear in total in at most $(p-1)n/p\approx 0.83n$ positions. When $p\geq 2^4/\mathbbm{e}\approx 5.9$, we have $|C|\in \Theta(q^n)$.

\begin{remark}\label{RempEffect} 
 Observe that if we increase the value of $p$ from $2^4/\mathbbm{e}$, then that will increase the size of the code $C$. However, we have a trade-off later in the proof of Lemma \ref{ProbabilityALGDelInsSubLargeq}; the larger $p$ is the less likely Algorithm \ref{algsubsinsdelLargeq} is to stop.
 \end{remark}
%

We denote $t_m=t_d+t_i+2t_s$ and for a word $\bw=(w_1,w_2,\dots,w_n)$ 
we denote $$M_i(\bw)=|\{j\mid w_j=i\in\Z_q\}|$$ and $$M_{a,b,c}(\bw)=|\{j\mid w_j\not\in\{a,b,c\}\}|.$$ The useful observation behind  Algorithm \ref{algsubsinsdelLargeq} is that $M_i(\by)+t_m\geq M_i(\by')$ for every $i$ and any two output words $\by, \by' \in Y$ and this bound can be attained when  $M_i(\by)\geq t_d+t_s$. This observation is further discussed in the proof of the following lemma.

\begin{lemma}\label{tm errorsLargeq}
Let $a$, $b$ and $c$ be distinct symbols of $\Z_q$. 
\begin{enumerate}
\item If $\by_1,\by_2\in Y$ are such that $M_a(\by_1)=M_a(\by_2)+t_m$, then $\by_1$ is formed from the transmitted word $\bx$ by inserting $t_i$ symbols $a$ and substituting $t_s$ symbols  by  $a$, and $\by_2$ is formed from $\bx$ by deleting $t_d$   symbols $a$ and substituting $t_s$ symbols $a$ with other symbols.
\item If $\by_1,\by_2,\by_3\in Y$ are such that $M_a(\by_1)=M_a(\by_2)+t_m$ and $M_b(\by_3)=M_b(\by_1)+t_m$, then $\by_1$ is formed from $\bx$ by inserting $t_i$ symbols $a$, substituting $t_s$ symbols $b$ by  $a$ and deleting $t_d$ symbols $b$.
\item If $\by_1,\by_2\in Y$ are such that $M_a(\by_1)=M_a(\by_2)+t_m$ and $M_{a,b,c}(\by_2)=M_{a,b,c}(\by_1)+t_i+t_s$, then $\by_2$ is formed from $\bx$ by inserting $t_i$ symbols other than $a$, $b$ or $c$, substituting $t_s$ symbols $a$ by symbols other than $a$, $b$ or $c$ and deleting $t_d$ symbols $a$.
\end{enumerate}
\end{lemma}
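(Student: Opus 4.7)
The entire argument rests on a single counting inequality. Fix any symbol $a \in \Z_q$ and any output word $\by$ obtained from $\bx$ by some combination of $t_d$ deletions, $t_i$ insertions and $t_s$ substitutions. Among these errors, let $\alpha^+$ be the number of insertions of the symbol $a$ and substitutions into $a$, and let $\alpha^-$ be the number of deletions of $a$ and substitutions out of $a$. Then $M_a(\by) = M_a(\bx) + \alpha^+ - \alpha^-$ with $\alpha^+ \le t_i + t_s$ and $\alpha^- \le t_d + t_s$. Consequently, for any two output words $\by_1, \by_2 \in Y$,
\[
M_a(\by_1) - M_a(\by_2) \;\leq\; (t_i + t_s) + (t_d + t_s) \;=\; t_m,
\]
and equality forces $\alpha^+(\by_1) = t_i + t_s$ and $\alpha^-(\by_2) = t_d + t_s$. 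This is the engine for all three parts.

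For part (1), the hypothesis $M_a(\by_1) = M_a(\by_2) + t_m$ forces both equalities above. Hence every one of the $t_i$ insertions producing $\by_1$ is an insertion of $a$, every one of the $t_s$ substitutions is a substitution into $a$, and none of the $t_d$ deletions removes an $a$. Dually, in $\by_2$ all $t_d$ deletions remove $a$'s, all $t_s$ substitutions act on $a$'s, and no insertion inserts an $a$. This yields the claimed description.

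For part (2), apply part (1) to the pair $(\by_1,\by_2)$ with the symbol $a$: the insertions producing $\by_1$ are $t_i$ copies of $a$, and the substitutions producing $\by_1$ all land on $a$. Apply part (1) again, this time to the pair $(\by_3,\by_1)$ with the symbol $b$: the description of the ``lower'' word in that pair, which is $\by_1$, forces all $t_d$ deletions producing $\by_1$ to be deletions of $b$'s, and all $t_s$ substitutions to act on $b$'s. The two conclusions about the substitutions must be consistent on the same $t_s$ error positions: the substituted symbols are sourced from $b$ and mapped to $a$, i.e.\ the substitutions are exactly $b \mapsto a$. Combining the three pieces gives the stated description of $\by_1$.

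For part (3), first apply part (1) to deduce the description of $\by_2$ with respect to $a$: its $t_d$ deletions remove $a$'s, its $t_s$ substitutions turn $a$'s into other symbols, and no insertion inserts an $a$. It remains to sharpen what ``other symbols'' means, and this is where the second hypothesis enters. Writing out how $M_{a,b,c}$ can change when going from $\bx$ to $\by_2$, only insertions of symbols outside $\{a,b,c\}$ (contributing at most $+t_i$) and substitutions whose target lies outside $\{a,b,c\}$ (contributing at most $+t_s$, since all $t_s$ substitutions act on $a$'s by part~(1), while deletions of $a$'s do not alter $M_{a,b,c}$) can increase this count; therefore $M_{a,b,c}(\by_2) - M_{a,b,c}(\bx) \le t_i + t_s$. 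Symmetrically, from the description of $\by_1$ given by part~(1), $M_{a,b,c}(\by_1) - M_{a,b,c}(\bx) \ge 0$ (insertions and substitutions producing $\by_1$ can only destroy non-$\{a,b,c\}$ symbols, not create them, and the best case is that deletions and substitutions both act only on $b$'s and $c$'s). Adding these gives $M_{a,b,c}(\by_2) - M_{a,b,c}(\by_1) \le t_i + t_s$, and the hypothesis forces equality in both bounds. The first saturation pins down the insertions of $\by_2$ as being outside $\{a,b,c\}$ and its substitutions as $a \mapsto \text{non-}\{a,b,c\}$, completing the description.

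The only real subtlety is the compatibility argument in part~(2): one must observe that the $t_s$ substitution events in $\by_1$ produced by the two applications of part~(1) are the \emph{same} $t_s$ events (there are only that many substitutions), so their source must be $b$ and their target $a$ simultaneously. The other steps are direct bookkeeping against the extremal inequality.
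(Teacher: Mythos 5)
Your parts (1) and (2) are correct and follow essentially the same route as the paper: the counting bound $M_a(\by_1)-M_a(\by_2)\le (t_i+t_s)+(t_d+t_s)=t_m$ together with its equality analysis is exactly the paper's proof of Claim~1, and the paper treats Claim~2 as a direct corollary of Claim~1, which you simply spell out (correctly, including the observation that the two applications of Claim~1 constrain the same $t_s$ substitution events in $\by_1$).

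Part (3), however, rests on a step that is false. You assert $M_{a,b,c}(\by_1)-M_{a,b,c}(\bx)\ge 0$, but your own parenthetical justification (``insertions and substitutions producing $\by_1$ can only destroy non-$\{a,b,c\}$ symbols, not create them'') proves the \emph{opposite} inequality $M_{a,b,c}(\by_1)\le M_{a,b,c}(\bx)$. Claim~1 forbids the deletions and substituted-out symbols of $\by_1$ from being $a$'s, but it does not prevent them from being symbols outside $\{a,b,c\}$; when that happens, $M_{a,b,c}(\by_1)<M_{a,b,c}(\bx)$, your addition of the two bounds no longer yields $M_{a,b,c}(\by_2)-M_{a,b,c}(\by_1)\le t_i+t_s$, and the hypothesis of part (3) can be saturated without all of $\by_2$'s inserted and substituted-in symbols lying outside $\{a,b,c\}$. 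Concretely, take $q=4$, $t_i=t_d=t_s=1$ (so $t_m=4$), $a=0$, $b=1$, $c=2$ and $\bx=00112233$; form $\by_1$ by inserting a $0$, substituting a $3$ by $0$ and deleting a $1$, and form $\by_2$ by deleting a $0$, substituting a $0$ by $1$ and inserting a $3$. Then $M_0(\by_1)=4=M_0(\by_2)+t_m$ and $M_{0,1,2}(\by_2)=3=M_{0,1,2}(\by_1)+t_i+t_s$, yet the substitution forming $\by_2$ turns a $0$ into $b=1$, contradicting the stated conclusion. The missing ingredient is precisely the equality $M_{a,b,c}(\by_1)=M_{a,b,c}(\bx)$, i.e.\ the knowledge that the symbols removed in forming $\by_1$ all lie in $\{a,b,c\}$. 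The paper's own (very terse) proof of Claim~3 uses this implicitly as well; in the algorithm it is available because the Step~17 conditions already force all modifications producing $\by_1=\by_{i_3,i_1}$ to involve only $i_1$ and $i_3$. So to make part (3) go through you must import that extra information (a Claim-2-type hypothesis on $\by_1$, as in the algorithm's application), rather than derive it from the two equalities of part (3) alone.
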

\begin{proof}
Recall that $t_m=t_i+t_d+2t_s$. Let us first prove Claim $1.$ Observe that we have $M_a(\by_1)\leq M_a(\bx)+t_i+t_s$ since only insertions and substitutions may increase the number of symbols $a$ in an output word and that the equality holds only when all insertions and substitutions increase the number of symbols $a$. Furthermore, we have $M_a(\by_2)\geq M_a(\bx)-t_s-t_d$ since only deletions and substitutions may decrease the number of symbols $a$ in an output word, and the equality holds only when all deletions and substitutions decrease the number of symbols $a$. Thus, $M_a(\by_1)\leq M_a(\by_2)+t_m$ and the equality holds only when  all insertions and substitutions increase the number of symbols $a$ in $\by_1$ and  all deletions and substitutions decrease the number of symbols $a$ in $\by_2$. Hence, the Claim follows.

Claim $2.$ is a direct corollary from Claim $1$.

Claim $3.$ follows from Claim $1.$ Indeed, by Claim $1$, each symbol deleted or substituted out of $\bx$ to form $\by_2$ is $a$. Moreover, word $\by_2$ has $t_i+t_s$ more symbols in total in the set $\F_q\setminus\{a,b,c\}$ than word $\by_1$. Thus, each symbol which we insert or substitute to $\bx$ to form $\by_2$ is in $\F_q\setminus\{a,b,c\}$.
%
\end{proof}

	\begin{algorithm}[th!]
	\caption{Decoding  in $\F_q^n$}\label{algsubsinsdelLargeq}
	\begin{algorithmic}[1]
	\Require  At least six output words $Y=\{\by_i\mid i\in \Z_+\}$
	\Ensure Transmitted word $\bc(=\bx)$ or empty word $\varepsilon$
\item Let $i=1$, $\bc=\varepsilon$, collection $Y_{2}=\{\by_{a,b}\mid a\neq b, a,b\in\F_q\}$, collection $Y_{3}=\{\by_{a,b,c}\mid a\neq b\neq c\neq a, a,b,c\in\F_q\}$, $Y_6=\emptyset$ and $\by_{a,b,c}=\by_{a,b}=\by_1$ for each $a\neq b\neq c\neq a$
\While{ $Y_6 = \emptyset$ and $i \leq |Y|$}
\State Read $\by_i\in Y$ 
\For{each $j, j',j''\in[0,q-1]$ with $j\neq j'\neq j''\neq j$} 
\State calculate $M_j(\by_i)$ and $M_{j,j',j''}(\by_i)$
\EndFor
\For{each $\by_{a,b}\in Y_{2}$}
\If{ $M_a(\by_i)\leq M_a(\by_{a,b})$ and $M_b(\by_i)\geq M_b(\by_{a,b})$}
\State Set $\by_{a,b}:=\by_{i}$ and store $M_a(\by_i)$ and $M_b(\by_i)$
\EndIf
\EndFor
\For{each $\by_{a,b,c}\in Y_{3}$}
\If{ $M_a(\by_i)\leq M_a(\by_{a,b,c})$ and $M_{a,b,c}(\by_i)\geq M_{a,b,c}(\by_{a,b,c})$}
\State Set $\by_{a,b,c}:=\by_{i}$ and store $M_a(\by_i)$ and $M_{a,b,c}(\by_i)$
\EndIf
\EndFor
\If{  there exist in $Y_{2}$ words $\by_1=\by_{i_3,i_1}$, $\by_2=\by_{i_1,i_2}$, $\by_3=\by_{i_2,i_3}$ and in $Y_3$ words $\by_4=\by_{i_1,i_2,i_3}$, $\by_5=\by_{i_2,i_1,i_3}$ and $\by_6=\by_{i_3,i_1,i_2}$ such that $i_j\neq i_h$ for all distinct $j,h$ as well as
\begin{align*}
M_{i_1}(\by_1)&=M_{i_1}(\by_2)+t_m,\hspace{0.1cm} M_{i_2}(\by_2)=M_{i_2}(\by_3)+t_m,  \\
M_{i_3}(\by_3)&=M_{i_3}(\by_1)+t_m,\hspace{0.1cm} M_{i_1}(\by_2)=M_{i_1}(\by_4),\\
M_{i_2}(\by_3)&=M_{i_2}(\by_5),\hspace{0.925cm}M_{i_3}(\by_1)=M_{i_3}(\by_6)\text{ and}\\
M_{i_1,i_2,i_3}(\by_4)&=M_{i_1,i_2,i_3}(\by_5)=M_{i_1,i_2,i_3}(\by_6)=M_{i_1,i_2,i_3}(\by_1)+t_i+t_s 
\end{align*}
\State}
\State $ $ Set $Y_{6}=\{\by_j\mid j\in[1,6]\}$
\EndIf
\State Set $i=i+1$
\EndWhile
\If{$Y_6=\emptyset$}
\State\Return empty word
\EndIf

\item Delete each $i_1$ and $i_3$ from $\by_1(=\by_{i_3,i_1})$ to construct $\bz_1$


\item Delete each $i_1$ and $i_2$ from $\by_2(=\by_{i_1,i_2})$ to construct $\bz_2$


\item Delete each $i_2$ and $i_3$ from $\by_3(=\by_{i_2,i_3})$    to construct $\bz_3$


\item Delete everything except each $i_2$ and $i_3$ from $\by_4(=\by_{i_1,i_2,i_3})$     to construct $\bz_4$


\item Delete everything except each $i_1$ and $i_3$ from $\by_5(=\by_{i_2,i_1,i_3})$    to construct $\bz_5$


\item Delete  everything except each $i_1$ and $i_2$ from $\by_6(=\by_{i_3,i_1,i_2})$    to construct $\bz_6$


\While{there exists an index $j$ such that $\bz_j\neq\mathbf{\varepsilon}$}
\If{exactly three different words $\bz_i,\bz_j$ and $\bz_h$ start with the same symbol $a$}
\State Concatenate $\bc$ from right with $a$
\State Remove the first symbol of $\bz_i,\bz_j$ and $\bz_h$
\EndIf
\EndWhile
\State\Return $\bc$
			\end{algorithmic}
	\end{algorithm}

In the following lemmas, we first show that Algorithm \ref{algsubsinsdelLargeq} never gives an incorrect output and that it  is efficient. 
Then we show that we are likely to find the  set $Y_6$.

\begin{lemma}\label{CorrectnessALGDelInsSubLargeq}
If, after reading exactly $N$ inputs, we find output words $\by_i\in Y_6$, $i\in[1,6]$, defined in Step $17$ of Algorithm \ref{algsubsinsdelLargeq}, then $\bc=\bx$ in Algorithm \ref{algsubsinsdelLargeq} and algorithm halts in $O(Nn)$ time. 
\end{lemma}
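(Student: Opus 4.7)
The plan is to proceed in two stages: first establish the precise structure of the six output words $\by_1,\dots,\by_6$ in $Y_6$ and of the derived sequences $\bz_1,\dots,\bz_6$, and then verify that the final reconstruction loop produces $\bx$ exactly. To start, I would apply Lemma~\ref{tm errorsLargeq} to the six words. From the three equalities $M_{i_1}(\by_1) = M_{i_1}(\by_2) + t_m$, $M_{i_2}(\by_2) = M_{i_2}(\by_3) + t_m$ and $M_{i_3}(\by_3) = M_{i_3}(\by_1) + t_m$, three cyclic applications of Claim~2 pin down the errors producing $\by_1, \by_2, \by_3$: for instance, $\by_1$ arises from $\bx$ by inserting $t_i$ symbols $i_1$, substituting $t_s$ symbols $i_3$ by $i_1$, and deleting $t_d$ symbols $i_3$. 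Combining these with the conditions on $\by_4, \by_5, \by_6$ and Claim~3 yields analogous descriptions; for example, $\by_4$ arises from $\bx$ by inserting $t_i$ symbols outside $\{i_1, i_2, i_3\}$, substituting $t_s$ symbols $i_1$ by such symbols, and deleting $t_d$ symbols $i_1$.

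These descriptions imply that each $\bz_j$ constructed by the algorithm equals the subsequence of $\bx$ (in order) using only positions whose symbols lie in a specific alphabet $A_j$, namely $A_1 = \Z_q \setminus \{i_1, i_3\}$, $A_2 = \Z_q \setminus \{i_1, i_2\}$, $A_3 = \Z_q \setminus \{i_2, i_3\}$, $A_4 = \{i_2, i_3\}$, $A_5 = \{i_1, i_3\}$, $A_6 = \{i_1, i_2\}$. The key observation is that every $a \in \Z_q$ belongs to exactly three of these alphabets: any $a \notin \{i_1, i_2, i_3\}$ lies in $A_1, A_2, A_3$, while $i_1 \in A_3 \cap A_5 \cap A_6$, $i_2 \in A_1 \cap A_4 \cap A_6$ and $i_3 \in A_2 \cap A_4 \cap A_5$.

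Correctness of the final while loop then follows by induction on the unprocessed suffix of $\bx$, maintaining the invariant that every $\bz_j$ equals the subsequence of this suffix restricted to $A_j$. If $a$ is the first symbol of the suffix, it is by construction the first symbol of exactly the three $\bz_j$'s with $a \in A_j$. The main obstacle is to show $a$ is the \emph{unique} symbol starting exactly three $\bz_j$'s, so that the if-condition fires correctly. I would argue that for any $b \neq a$, if $b \in A_j$ then $b$ can start $\bz_j$ only when also $a \notin A_j$ (else $a$ would precede $b$ in the relevant subsequence). Since $a$ is absent from exactly three alphabets, a short four-case check (based on whether $b$ is in $\{i_1, i_2, i_3\}$ or not) shows the three alphabets $A_j$ where $a$ is absent always contain $b$ in at most two of them; hence $b$ starts at most two current $\bz_j$'s. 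Thus the algorithm picks the correct next symbol $a$ and updates the three affected $\bz_j$'s, preserving the invariant; the loop terminates with every $\bz_j$ empty and $\bc = \bx$.

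For the complexity bound, each iteration of the main while loop reads one $\by_i$ of length $O(n)$, computes $M_j(\by_i)$ and $M_{j,j',j''}(\by_i)$ over the constantly-many choices of $j,j',j''$ (since $q$ is held constant), and performs $O(1)$-sized updates of $Y_2$, $Y_3$ and an $O(1)$ check of the $Y_6$ condition using the stored $M$-values; hence each iteration runs in $O(n)$, giving $O(Nn)$ across the $N$ outputs read. Constructing the six $\bz_j$'s costs $O(n)$ altogether, and the final while loop removes at least one symbol per iteration and performs $O(1)$ work to inspect the six current first symbols, so it contributes only $O(n)$. The overall running time is therefore $O(Nn)$, completing the proof.
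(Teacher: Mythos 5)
Your proposal is correct and follows essentially the same route as the paper's proof: apply Lemma \ref{tm errorsLargeq} to pin down the error structure of the six words in $Y_6$, observe that each $\bz_j$ is the order-preserving subsequence of $\bx$ over one of the $\binom{4}{2}=6$ complementary symbol-type pairs, and then reconstruct $\bx$ symbol by symbol in the final loop, with the same $O(Nn)$ complexity accounting. Your explicit four-case check that no symbol $b\neq a$ can head three of the current $\bz_j$'s is a slightly more detailed justification of the ``exactly three'' condition than the paper's brief remark, but it is the same underlying argument.
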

\begin{proof}
Let $\by_i$ and $\bz_i$, $i\in[1,6]$, be as in Algorithm \ref{algsubsinsdelLargeq}. Consider first the output words $\by_1$, $\by_2$ and $\by_3$. Observe that $M_{i_1}(\by_1) = M_{i_1}(\by_2)+t_m$ and $M_{i_3}(\by_3) = M_{i_3}(\by_1)+t_m$. Therefore, by Lemma~\ref{tm errorsLargeq}, each of $t_i$ inserted symbols in $\by_1$ is $i_1$, each of $t_d$ deleted symbols is $i_3$ and all $t_s$ substitutions change symbols $i_3$ to symbols $i_1$. Consequently, the output word $\by_1$ is obtained from the (unknown) transmitted word $\bx$ by modifying only the symbols $i_1$ and $i_3$.  Similarly   (due to the three first equations in Step $17$)
modifications to $\bx$ in obtaining $\by_2$ affect only the symbols  $i_1$ and $i_2$ and
modifications to $\bx$ in obtaining $\by_3$ affect only the symbols  $i_2$ and $i_3$. Observe that at this point we know the exact number of each symbol in $\bx$  (but not their order). In particular, $$M_{i_1}(\bx)=M_{i_1}(\by_3),M_{i_2}(\bx)=M_{i_2}(\by_1),M_{i_3}(\bx)=M_{i_3}(\by_2)$$ and $$M_{i_4}(\bx)=M_{i_4}(\by_1)=M_{i_4}(\by_2)=M_{i_4}(\by_3)$$ for any $i_4\not\in\{i_1,i_2,i_3\}.$ Let us then consider the output words $\by_4,\by_5$ and $\by_6$.

By the previous observations, we first obtain $M_{i_1}(\by_1) = M_{i_1}(\by_2) + t_m = M_{i_1}(\by_4) + t_m$. Therefore, as $M_{i_1,i_2,i_3}(\by_4) = M_{i_1,i_2,i_3}(\by_1) + t_i + t_s$, we obtain by Lemma~16(3) that $\by_4$ is formed from $\bx$ by adding $t_i + t_s$ symbols (with insertions or substitutions) other than $i_1$, $i_2$ or $i_3$ and by removing $t_d + t_s$ symbols $i_1$ (with deletions or substitutions). Similarly, we obtain that the symbols added to $\by_5$ and $\by_6$ are other that $i_1$, $i_2$ or $i_3$ and the removed symbols are $i_2$ and $i_3$, respectively.

Consequently, if we consider the four symbol types  examined above, namely $i_1$, $i_2$, $i_3$ and $\F_q\setminus\{i_1,i_2,i_3\}$. The modifications within each word $\by_i$, $i\in[1,6]$, with respect to $\bx$ are restricted to symbols in two of the examined types.  Thus, we know that symbols  in $\bz_i$ ($i=[1,6]$) are ordered in the same way as in the transmitted word $\bx$, since we have removed all modified symbols from $\by_i$ when we have formed $\bz_i$. Furthermore, we have $\binom{4}{2}=6$ different words $\bz_i$ and for each pair of the missing symbol types, we have a word $\bz_i$ from which exactly those types are missing.

Next we show that we obtain the transmitted codeword $\bc=\bx$ in Algorithm \ref{algsubsinsdelLargeq} during Steps 32--37. If, for example, $x_1= i_1$, then the first symbol of $\bz_{3},\bz_{5}$ and $\bz_{6}$ is $x_1$. Moreover, $\bz_1, \bz_2$ and $\bz_4$ cannot share a common first symbol. The same is true for $x_1= i_j$ for any $i_j\in\{i_1,i_2,i_3\}$ since words $\bz_i$ go through all $\binom{4}{2}=6$ combinations of missing symbol type pairs among the four examined symbol types. Furthermore, if $x_1\in\F_q\setminus\{i_1,i_2,i_3\}$, then 
$x_1$ is equal to the first symbol of $z_{1}$, $z_{2}$ and $z_{3}$. 
Therefore, in all cases, we have $c_1=x_1$. As we go on, we remove the first symbol from those $\bz_i$'s which shared the same symbol. By iteratively applying these arguments, we obtain the rest of the symbols of $\bx$.

Let us then consider the complexity of the algorithm. Here, we assume that $q$ is a constant on $n$. We observe that in the first while loop between Steps $2$ and $22$, we only do simple coordinatewise comparison operations and the loop lasts at most $N$ rounds. Between Steps $26$ and $31$, we again make only simple modifications to the words of length $n+t_i-t_d$. Finally, all operations in the final while loop occur to words of length at most $n-t_d$ and the operations are  simple. Hence, the complexity of the algorithm is in $O(Nn)$.
\end{proof}

\begin{lemma}\label{ProbabilityALGDelInsSubLargeq}
As $N$ increases, the probability for obtaining output words $\by_i\in Y_6$, $i\in [1,6]$, in Step $17$ of Algorithm \ref{algsubsinsdelLargeq} approaches $1$ for any $n\geq (q-1)p(t_d+t_s)$. 
\end{lemma}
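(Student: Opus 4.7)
The plan is to identify, for any transmitted codeword $\bx \in C$, six concrete channel error patterns whose outputs jointly satisfy every equality checked at Step~17 of Algorithm~\ref{algsubsinsdelLargeq}, and then to argue by independence of the channels that each such pattern is produced by at least one of $N$ channels with probability tending to $1$.

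First I would use the defining restriction of $C$ to pin down three abundant symbols of $\bx$. Let $s_1,s_2,s_3$ be the three symbols of $\F_q$ occurring most often in $\bx$. Since $s_1,s_2$ together cover at most $\lceil(p-1)n/p\rceil-1$ coordinates, the remaining $q-2$ symbols cover at least $n/p$ coordinates, so by the pigeonhole principle $M_{s_3}(\bx)\geq n/((q-2)p)$; the hypothesis $n\geq(q-1)p(t_d+t_s)$ then forces $M_{s_3}(\bx)\geq t_d+t_s$, and the same lower bound holds \emph{a fortiori} for $M_{s_1}(\bx)$ and $M_{s_2}(\bx)$. Set $i_j:=s_j$ for $j=1,2,3$.

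Second I would spell out six target error patterns applied to $\bx$. For $\by_1^*$: delete $t_d$ of the symbols $i_3$ in $\bx$, substitute $t_s$ further symbols $i_3$ by $i_1$, and insert $t_i$ symbols $i_1$ at arbitrary positions; cycle the roles $i_1\to i_2\to i_3\to i_1$ to obtain $\by_2^*,\by_3^*$. For $\by_4^*$: delete $t_d$ symbols $i_1$, substitute $t_s$ further symbols $i_1$ by any symbols of $\F_q\setminus\{i_1,i_2,i_3\}$, and insert $t_i$ such ``outside'' symbols (feasible because $q\geq 4$); build $\by_5^*,\by_6^*$ analogously for $i_2$ and $i_3$. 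A direct count of the coordinates--using Lemma~\ref{tm errorsLargeq} to recognise these as the extremal cases of the count identities--verifies that these six words simultaneously fulfil all nine equalities listed at Step~17. Moreover, because the update rules in Steps~8--9 and 13--14 only replace a stored candidate by one that is at least as extreme in both tracked statistics, and because each target word is simultaneously minimal in one count and maximal in another, the first appearance of each $\by_k^*$ overwrites any non-extremal predecessor; hence once all six targets have been fed to the algorithm the stored $\by_{a,b}$ and $\by_{a,b,c}$ attain the required values and Step~17 triggers.

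Finally, let $T$ be the (finite, $N$-independent) total number of error patterns available in a single channel. Each of the six targets is realised by at least one such pattern, so it is produced in any given channel with probability at least $1/T$. By independence of channels and a union bound, the probability that some target has still not appeared after $N$ channels is at most $6(1-1/T)^{N}$, which tends to $0$ as $N\to\infty$. The main obstacle is the combinatorial verification in the second step: one must carefully check that the six deliberately chosen targets together satisfy \emph{all} nine equalities of Step~17 (and not merely the single-triple identities recorded in Lemma~\ref{tm errorsLargeq}); once this consistency is established, the probability estimate is entirely routine.
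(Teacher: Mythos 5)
Your proposal is correct and its skeleton coincides with the paper's: you pick the three most frequent symbols $i_1,i_2,i_3$ of $\bx$, use the defining restriction of $C$ plus $n\geq (q-1)p(t_d+t_s)$ to guarantee $M_{i_3}(\bx)\geq \lceil n/((q-2)p)\rceil\geq t_d+t_s$ so that the six extremal error patterns are feasible, check via Lemma \ref{tm errorsLargeq} that their outputs meet all the equalities of Step 17, and note that global extremality of the targets in both tracked statistics forces Steps 7--16 to retain them (a point the paper only asserts; your argument implicitly also needs that any equally extremal replacement of $\by_{i_3,i_1}$ keeps the untracked count $M_{i_1,i_2,i_3}(\by_1)$ equal to $M_{i_1,i_2,i_3}(\bx)$, which follows from the same extremality reasoning). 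Where you genuinely diverge is the probabilistic finish: you bound the per-channel probability of each target simply by $1/T$, with $T$ the total number of equally likely error patterns, and conclude with a union bound $6(1-1/T)^N\to 0$. This is valid for the lemma as stated (fixed $n$, $N\to\infty$), but $T$ grows with $n$, so your bound says nothing about how $N$ must scale with $n$. The bulk of the paper's proof is precisely a sharper estimate: it lower-bounds the per-channel probability of the \emph{least likely} target (all removals hitting $i_3$, all additions being $i_1$) by a constant $A\geq \frac{1}{4(t_i+1)}\bigl(\frac{1}{q-1}\bigr)^{t_s}\bigl(\frac{1}{(q-2)(q-1)p}\bigr)^{t_d+t_s}\bigl(\frac{1}{q}\bigr)^{t_i}$ that is \emph{independent of $n$}, using the insertion-ball formula and partial binomial-sum estimates together with $M_{i_3}(\bx)\geq \lceil n/((q-2)p)\rceil$. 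That uniformity in $n$ is what makes the condition on $C$ and the trade-off in Remark \ref{RempEffect} meaningful and matches the simulated behaviour (required $N$ not exploding with $n$); your more elementary argument buys brevity at the cost of that quantitative content.
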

\begin{proof}
Consider the set $Y_6$ in Algorithm \ref{algsubsinsdelLargeq}. Recall from the proof of Lemma \ref{CorrectnessALGDelInsSubLargeq} the separation of symbols into four types $i_1,i_2,i_3$ and $\F_q\setminus\{i_1,i_2,i_3\}$.
 Moreover, we see (as in Lemma \ref{CorrectnessALGDelInsSubLargeq}) from the equations in Step $17$ of Algorithm \ref{algsubsinsdelLargeq}  that $Y_6$ has six words and each of them can be obtained from $\bx$ by modifying the symbols of exactly two symbol types. In particular, we observe that 
 for each symbol pair $i_j$, $i_h$ ($j\neq h$ and $j,h\in \{1,2,3\}$) there exists a word $\by_i\in Y_6$ which is formed from $\bx$ by focusing all the modifications to these two symbols. Moreover, the symbols of $\Z_q\setminus \{i_1,i_2,i_3\}$ 
 are such that they are never removed from $\bx$ to form these output words. 
Moreover, there are multiple possible ways  (regarding the symbols) in which we can form the subset $Y_6$ from $Y_{2}$ and $Y_3$ and it is enough for our claim that we find at least one of these ways. Furthermore, if a set of words in $Y$ satisfies the conditions  set for $Y_6$ in Step 17, then those words are found in Steps 7 to 16. Let us assume without loss of generality that $i_1,i_2$ and $i_3$ are the three most common symbols in $\bx\in C\subseteq \F^n_q$ and $M_{i_3}(\bx)\leq M_{i_2}(\bx)\leq M_{i_1}(\bx)$.  Recall, that our restrictions on code $C$ guarantee, that $M_{i_3}(\bx)\geq \lceil n/((q-2)p)\rceil$ by the pigeonhole principle.

 Thus, here we consider only the case where we remove symbols $i_1$, $i_2$ and $i_3$. 
 Notice that the likelihood of obtaining exactly this kind set $Y_6$ is less than the likelihood of obtaining any suitable set $Y_6$. 
Now, the least likely case is the one where we remove symbols $i_3$ from $\bx$ since $i_3$ is the least common among $\{i_1,i_2,i_3\}$. We denote that word by $\by_1$ and the symbol we insert to it is assumed to be $i_1$ (all symbols have equal probability to be inserted).
 Notice that since $n\geq (q-1)p(t_d+t_s)$, we have $M_{i_3}(\bx)\geq \lceil n/((q-2)p)\rceil\geq t_d+t_s$.  

In the subsequent approximations, we will need the following well-known lower bound. If $K,h$ be such non-negative integers that $K\geq3h-1$, then we have \begin{align}\label{2binom>ballLargeq}
2\binom{K}{h}=&\binom{K}{h}+\frac{K!}{h!(K-h)!}\nonumber\\
=&\binom{K}{h}+\frac{K!}{(h-1)!(K-h+1)!}\cdot\frac{K-h+1}{h}\nonumber\\
\geq&\binom{K}{h}+2\binom{K}{h-1}\\
\geq&\binom{K}{h}+\binom{K}{h-1}+2\binom{K}{h-2}\nonumber\\
\geq&\cdots\geq V_2(K,h).\nonumber
\end{align}

Let us first consider the probability to obtain the word $\by_1$. To obtain it, $t_i$ insertions occur and each insertion contains only symbol $i_1$. Recall that the likelihood of any specific insertion is $1/|B^I_{t_i}(\bx)|$. First the probability that exactly $t_i$ (for a positive $t_i$) insertions occur is at least 
$\frac{1}{t_i+1}.$ Indeed, by Equation (\ref{insBall}) we have $$\frac{|B_{t_i}^I(\bx)|-|B_{t_i-1}^I(\bx)|}{|B_{t_i}^I(\bx)|}\geq\frac{q^{t_i}\binom{n+t_i}{t_i}}{(t_i+1)q^{t_i}\binom{n+t_i}{t_i}}=\frac{1}{t_i+1}.$$

Probability that each newly inserted symbol is $i_1$ is $\left(\frac{1}{q}\right)^{t_i}.$

Next, we give a lower bound for the probability that each deletion and substitution modifies symbol $i_3$ and that there occurs exactly $t_s$ substitutions and exactly $t_d$ deletions. We assume here that we cannot substitute and delete the same symbol or any inserted symbol. In particular, there are at least $(q-1)^{t_s}\binom{\lceil n/((q-2)p)\rceil}{t_s+t_d}\binom{t_s+t_d}{t_s}$ ways in which the $t_s+t_d$ deletions and substitutions may occur. Moreover, we may apply $i\leq t_s$ substitutions and $j\leq t_d$ deletions to $\bx$ in $(q-1)^i\binom{n}{i+j}\binom{i+j}{i}$ different ways. Hence, for the lower bound of the considered probability, we have 

\begin{align}
&\frac{(q-1)^{t_s}\binom{\lceil n/((q-2)p)\rceil}{t_s+t_d}\binom{t_s+t_d}{t_s}}{\sum_{j=0}^{t_d}\sum_{i=0}^{t_s}(q-1)^i\binom{n}{i+j}\binom{i+j}{i}}\nonumber\\
\geq&\frac{\binom{\lceil n/((q-2)p)\rceil}{t_s+t_d}\binom{t_s+t_d}{t_s}}{\sum_{j=0}^{t_d}\sum_{i=0}^{t_s}\binom{n}{i+j}\binom{i+t_d}{i}}\nonumber\\
\geq&\frac{\binom{\lceil n/((q-2)p)\rceil}{t_s+t_d}}{\sum_{j=0}^{t_d}\sum_{i=0}^{t_s}\binom{n}{i+j}}\nonumber\\
\geq&\frac{\binom{\lceil n/((q-2)p)\rceil}{t_s+t_d}}{\sum_{j=0}^{t_d}V_2(n,j+t_s)}\nonumber\\
\geq&\frac{\binom{\lceil n/((q-2)p)\rceil}{t_s+t_d}}{2\sum_{j=0}^{t_d}\binom{n}{j+t_s}}\label{ballbinom1Largeq}\\
\geq&\frac{\binom{\lceil n/((q-2)p)\rceil}{t_s+t_d}}{2V_2(n,t_d+t_s)}\nonumber\\
\geq&\frac{\binom{\lceil n/((q-2)p)\rceil}{t_s+t_d}}{4\binom{n}{t_d+t_s}}\label{ballbinom2Largeq}\\
=&\frac{\lceil n/((q-2)p)\rceil!(n-t_d-t_s)!}{4(\lceil n/((q-2)p)\rceil-t_d-t_s)!n!}\nonumber\\
\geq&\frac{1}{4}\cdot\left(\frac{\lceil n/((q-2)p)\rceil+1-t_d-t_s}{n}\right)^{t_d+t_s}\nonumber\\
\geq&\frac{1}{4}\cdot\left(\frac{1}{(q-2)p}-\frac{t_d+t_s}{n}\right)^{t_d+t_s}
\geq\frac{1}{4}\cdot\left(\frac{1}{(q-2)(q-1)p}\right)^{t_d+t_s}.\nonumber
\end{align}
Inequalities (\ref{ballbinom1Largeq}) and (\ref{ballbinom2Largeq}) are due to Inequality (\ref{2binom>ballLargeq}). Observe that the condition $K\geq 3h-1$ in Inequality (\ref{2binom>ballLargeq}) is satisfied since $n> 3(t_d+t_s)$.

Finally, the probability that each substitution produces $i_1$ is  $\left(\frac{1}{q-1}\right)^{t_s}.$

Observe that each of these probabilities is positive and can be bounded from below by a positive constant $$A\geq \left(\frac{1}{q-1}\right)^{t_s}\cdot \frac{1}{4}\left(\frac{1}{(q-2)(q-1)p}\right)^{t_d+t_s} \cdot \left(\frac{1}{q}\right)^{t_i} \cdot \frac{1}{t_i+1} $$ which does not depend on $n$. Hence, the probability for not obtaining $\by_1$ in a channel is at most $(1-A)^N$ which tends to $0$ as $N$ grows. Furthermore, we are less or equally likely to obtain $\by_1$ than $\by_i$ for other values of $i$ since $ M_{i_3}(\bx)\leq M_{i_2}(\bx)\leq M_{i_1}(\bx)$. Note that for $\by_4,\by_5$ and $\by_6$ we may have more options (depending on whether $q\geq5$) for symbols which we can insert or substitute into these words and hence, the probability to obtain these words is at least the same as the probability to obtain $\by_1$. Thus, the probability to obtain the output words in $Y_6$ tends to $1$ as $N$ grows.
\end{proof}

In the following example, we consider how Algorithm \ref{algsubsinsdelLargeq} works after we have obtained output words in $Y_6$. 

\begin{example}
Consider the transmitted word $\bx\in \F_6^{10}$ in Table \ref{ExampleTable} together with $t_i=2,t_d=t_s=1$, output set $Y_6$ and words $\bz_i$. We have presented words $\by_j\in Y_6$ in the table.  Notice that values $q,t_d,t_s$ and $n$ do not satisfy condition $n\geq(q-1)p(t_d+t_s)$ of Lemma \ref{ProbabilityALGDelInsSubLargeq}. However, this is not a problem since the requirement was established only for making sure that we obtain set $Y_6$ with high probability and hence, we do not have to worry about Lemma \ref{ProbabilityALGDelInsSubLargeq}.

Let us now consider Steps from 26 to 31 of the algorithm. 
\begin{enumerate}
\item $c_1=1$ and the first bits of $\bz_1,\bz_4$ and $\bz_6$ are deleted. 

\item $c_2=2$ and the first bits of $\bz_j$  are deleted ($j\in\{2,4,5\}$). 

\item $c_3=0$  and the first bits of  $\bz_j$ are deleted ($j\in\{3,5,6\}$).  We continue iterating the process in this way.

\item   $c_4=0$ and  the first bits of  $\bz_j$ are deleted ($j\in\{3,5,6\}$). 

\item    $c_5=3$ and  the first bits of  $\bz_j$ are deleted ($j\in\{1,2,3\}$). At this point, we have $\bz_1=11$, $\bz_2=22$, $\bz_3=0$, $\bz_4=2121$, $\bz_5=202$ and $\bz_6=101$. 

\item   $c_6=2$ and the first bits of  $\bz_j$ are deleted ($j\in\{2,4,5\}$). 

\item   $c_7=1$ and the first bits of  $\bz_j$ are deleted ($j\in\{1,4,6\}$). 

\item[8.]  $c_8=0$ and the first bits of  $\bz_j$ are deleted ($j\in\{3,5,6\}$). Word $\bz_3$ becomes empty but the algorithm continues.

\item[9.] $c_9=2$ and the first bits of  $\bz_j$ are deleted ($j\in\{2,4,5\}$). 

\item[10.] Finally, we get $c_{10}=1$. Now,  $\bc=\bx$ as claimed.

\end{enumerate}\end{example}
\begin{table}[t]\caption{Word $\bx$, set $Y_6$ and words $\bz_i$.}\label{ExampleTable}\centering
\begin{tabular}{|l|l|l|l|l|l|l|l|l|l|l|l|}
\hline
$\bx$ & 1 &  2&  0&  0&  3& 2 & 1 & 0 & 2 & 1 &  \\ \hline
$\by_1(=\by_{2,0})$ & 1 & 0 & 0 & 0 & 3 & 0 & 1 & 0 & 2 & 1 & 0 \\ \hline
$\by_2(=\by_{0,1})$ & 1 & 2 & 1 & 3 & 2 & 1 & 1 & 0 & 1 & 2 & 1 \\ \hline
$\by_3(=\by_{1,2})$ & 2 & 2 & 0 & 0 & 3 & 2 & 0 & 2 & 2 & 1 & 2 \\ \hline
$\by_4(=\by_{0,1,2})$ & 3 & 1& 2 & 0 & 3 & 2 & 1 & 3 & 2 & 4 & 1   \\ \hline
$\by_5(=\by_{1,0,2})$ & 3 & 4 & 2 & 0 & 3 & 0 & 3 & 2 & 0 & 2 & 1 \\ \hline
$\by_6(=\by_{2,0,1})$ & 3 & 1 & 0 & 0 & 3 & 3 & 5 & 1 & 0 & 2 & 1 \\ \hline
$\bz_1$ & 1 & 3 & 1 & 1 &  &  &  &  &  &  &  \\ \hline
$\bz_2$ & 2 & 3 & 2 & 2 &  &  &  &  &  &  &  \\ \hline
$\bz_3$ & 0 & 0 & 3 & 0 &  &  &  &  &  &  &  \\ \hline
$\bz_4$ & 1 & 2 & 2 & 1 & 2 & 1 &  &  &  &  &  \\ \hline
$\bz_5$ & 2 & 0 & 0 & 2 & 0 & 2 &  &  &  &  &  \\ \hline
$\bz_6$ & 1 & 0 & 0 & 1 & 0 & 1 &  &  &  &  &  \\ \hline
\end{tabular}
\end{table}	
	

\begin{remark}\label{Rem:q<4}
Algorithm \ref{algsubsinsdelLargeq} requires that $q\geq4$. Let us consider the case with $q=3$. If the insertion, deletion and substitution errors occur in some word $\by$, for example, to symbols $0$ and $1$, then we only know how many symbols $2$ there are in $\bx$ but we do not know their location in respect to other symbols. This prevents us from reconstructing the transmitted word $\bx$ in a similar way.
\end{remark}
	
Recall Lemma~\ref{ProbabilityALGDelInsSubLargeq} in which we showed that the probability of finding a suitable set $Y_6$ of output words in the algorithm approaches $1$ as $N$ increases. In addition to the asymptotical result of the lemma, we have also run some simulations for obtaining estimates on the exact number of required channels when $q=4$. The simulations have been performed in a rather simple and straightforward manner: The given number of (at most) $t_s$ substitution, $t_d$ deletion and $t_i$ insertion errors have been randomly applied to an arbitrarily chosen transmitted word $\bx \in C$ and then channel outputs have been read until the set $Y_6$ has been obtained. In Table~\ref{Table_ids_algorith_simulations}, for chosen lengths $n$ and number of different errors, we have given an average and median number of channels required when the simulations have run for $100000$ samples. It should be noted that in each case the number of $100000$ samples seems to be enough for the average and median values to converge to the extent that they give a sensible approximation on the number of required channels.

\begin{table}
	\centering
	\caption{The simulations with $100000$ samples for approximating the average and median number of required channels for $q=4$ and various choices of $n$, $t_s$, $t_d$ and $t_i$.} \label{Table_ids_algorith_simulations}
	\begin{tabular}{|c|c|c|c|c|c|}
		\hline
		$n$ & $t_s$ & $t_d$ & $t_i$ & Average & Median \\ \hline
		20  &   1   &   1   &   1   &   489   &  390   \\ \hline
		60  &   1   &   1   &   1   &   310   &  280   \\ \hline
		100 &   1   &   1   &   1   &   288   &  263   \\ \hline
		200 &   1   &   1   &   1   &   274   &  252   \\ \hline
		100 &   2   &   1   &   1   &  3940   &  3506  \\ \hline
		100 &   1   &   2   &   1   &  1310   &  1166  \\ \hline
		100 &   1   &   1   &   2   &  1163   &  1059  \\ \hline
		100 &   1   &   2   &   2   &  5243   &  4685  \\ \hline
		100 &   0   &   0   &   1   &    7    &   6    \\ \hline
		100 &   0   &   1   &   1   &   21    &   20   \\ \hline
		100 &   0   &   0   &   2   &   32    &   29   \\ \hline
		100 &   0   &   0   &   3   &   133   &  118   \\ \hline
	\end{tabular}    
\end{table}

Based on Table~\ref{Table_ids_algorith_simulations}, we can make the following observations which also seem plausible by the analytical study of the algorithm:
\begin{itemize}
	\item The number of required channels decreases when the length $n$ increases.
	\item The substitution errors are the most difficult ones for the algorithm to handle.
	\item The algorithm works surprisingly well when no substitution errors occur.
\end{itemize}

\bibliographystyle{IEEEtran}
\bibliography{ISIT2019}

\end{document}